\newcounter{subfigure}
\newcommand{\R}{\mathbb{R}}
\newcommand{\N}{\mathbb{N}}
\renewcommand{\S}{\mathbb{S}^1}
\newcommand{\ea}{\textit{et al. }}
\renewcommand{\epsilon}{\varepsilon}
\renewcommand{\th}{\text{th}}
\newcommand{\sgn}{\operatorname{sgn}}
\newtheorem{thm}{Theorem}
\newtheorem{lemma}{Lemma}
\begin{document}

\title{Standard map-like models for single and multiple walkers in an annular cavity}

\author{Aminur Rahman \thanks{Corresponding Author, \url{amin.rahman@ttu.edu},
Department of Mathematics and Statistics, Texas Tech University}}

\date{}

\maketitle

\begin{abstract}
Recent experiments on walking droplets in an annular cavity showed the existence of
complex dynamics including chaotically changing velocity.  This article presents models,
influenced by the kicked rotator/standard map, for both single and multiple droplets.  The models are shown
to achieve both qualitative and quantitative agreement with the experiments, and makes
predictions about heretofore unobserved behavior.  Using dynamical systems techniques
and bifurcation theory, the single droplet model is analyzed to prove dynamics suggested by the numerical
simulations.
\end{abstract}

\bigskip
\bigskip
\bigskip
\bigskip

\textbf{
In its simplest form, a droplet walking in an annular cavity can be thought of as a dynamical
system on a topological circle ($\S$).  As experiments have shown, while there is some radial
motion, most of the interesting dynamics is restricted to the circumferential direction.  Further, since
dynamics is of interest, it is reasonable to focus on the walking regime ignoring the transition
from bouncing to walking.  When the amplitude of acceleration of the vibrating fluid bath is
low enough the droplet bounces in place.  Once the amplitude is above a certain threshold
the droplet starts walking, but with low enough inertia the speed is damped to a steady-state
regardless of the initial velocity.  As the damping is
reduced, the velocity destabilizes and eventually leads to chaotic walking.  Interestingly,
destabilization of the velocity also occurs in the large damping regime when another droplet
is added, then the $n+1$ droplets reach an equal steady state velocity higher than that
of $n$ droplets.  One well-known dynamical system with similar behavior is a \emph{kicked rotator}
\cite{KickedRotator} (also called a \emph{kicked rotor}), the dynamics of which can be described
using the \emph{standard map} \cite{Chirikov1969, Chirikov1979}.
Of course, the physical limitations of how far a droplet
can move in one leap disqualifies the standard map as a candidate for a model.
However, models influenced by the standard map, for both individual and multiple
droplet dynamics in an annulus are developed in this investigation.  The single droplet model
is then analyzed through numerical
simulations and dynamical systems theory.  Simulations show close agreement with experiments.
Analysis predicts behavior suggested by the simulations.  The combination of the
two make for a complete dynamical systems study of droplets in an annulus.}

\section{Introduction}
\label{Sec: Intro}

In the \emph{Annual Review of Fluid Mechanics}, Bush \cite{Bush15a} highlights the important role of
dynamical systems in studying walking droplets.  While there has been an abundance of
continuous finite and infinite dimensional dynamical or semi-dynamical systems models,
the introduction of discrete dynamical (in the strictest sense of the word) models has been a
recent occurrence.  A two-dimensional billiards - type model was developed by Shirokoff
\cite{Shirokoff13}, then Gilet \cite{Gilet14} presented perhaps the simplest dynamical model,
which considers straight-line walking in a semi-confined geometry.  Later, Gilet \cite{Gilet16}
derived a more complex, in both the philosophical and mathematical sense, discrete model
for walkers on a circular corral.

In recent years, studying simpler versions of the well-known walking droplets phenomenon has
garnered much interest.  There have been experiments \cite{FHSV17}, dynamical models \cite{Gilet14},
and dynamical systems analysis \cite{RahmanBlackmore16} on walking in a straight-line (linear channels).
The dynamical models and early analysis also inspired the discovery of a new type of homoclinic-like
global bifurcation \cite{RahmanBlackmore17}, which lead to a study of the dynamics of a generalized
map exhibiting this new bifurcation \cite{RJB17}.

Walking on a straight-line in a confined geometry will admit dynamics on a finite interval $[a,b] \in \R$.
However, boundaries pose nontrivial modeling issues as shown by Gilet \cite{Gilet14} and by Rahman
and Blackmore \cite{RahmanBlackmore16}, where the droplet escapes the confinement when the
damping is dropped (or the inertia is increased) beyond the threshold for chaos.  One way to remedy
this issue is to extend the domain to the whole real line, however that makes it impossible to compare
certain statistical properties, such as histograms, with that of experiments.

Alas, we call upon dynamical systems and topological techniques to aid us.  Suppose there exists a
$C^n$ - discrete dynamical system on an interval $[a,b]$ defined as $x_{n+1} = f(x_n)$ and there
exists an $M$ such that $f(b - \omega) > b$ and $f(a + \omega) < a$ for all $0 < \omega < M$.  Clearly,
the iterates are escaping the domain $[a,b]$, however if the two endpoints can be identified ( i.e.
$f(a) = f(b),\, f'(a) = f'(b),\,\ldots,\, f^{(n)}(a)=f^{(n)}(b)$ and $x_{n+1} = a + f(x_n) \mod (b-a)$),
then we have a dynamical system confined to $[a,b]$ without any boundary issues.  Moreover, the system
can be thought of as being on a circle $\S$ by transforming it into the system $\theta_{n+1} = g(\theta_n)$, which is
the simplest mathematical analog of a walker on an annulus.

It is fortunate that there have been some experiments with walkers on an annulus.  It has been
demonstrated by Filoux \ea \cite{FHV15} that a single droplet walking in the ``low memory
regime'' ($95\%$ of Faraday wave threshold) has a constant speed, which is independent of the
circumference.  When another droplet is inserted, the speed increases to a different steady-state.
This increase in speed occurs as more and more droplets enter the annulus.  On the other end of
the memory spectrum, Pucci and Harris conducted experiments for a single droplet above the Faraday
wave threshold in which they observed the walker's speed destabilizing and becoming seemingly chaotic
\cite{PucciHarrisPrivate}.

One may suggest transforming Gilet's model \cite{Gilet14} to be on an annulus, however
it does not reproduce behavior observed in experiments because the model implies that the walker
is stagnant at certain fixed points in the chaotic regime.  This necessitates a novel model
for dynamics on an annulus, which is the focus of this article, beginning in Sec. \ref{Sec: Single} with
models of a single droplet.  The dynamical systems and bifurcation analysis of the model from Sec. \ref{Sec: Single}
is presented in Sec. \ref{Sec: Analysis}.  Then in Sec. \ref{Sec: Multiple} the standard map-like model from
Sec. \ref{Sec: Single} is extended to also include multiple droplets.  In both Sec. \ref{Sec: Multiple} and
\ref{Sec: Single} we show quite
close quantitative agreement with experiments of multiple droplets \cite{FHV15} and qualitative agreement for
chaotic walking of a single droplet \cite{PucciHarrisPrivate}.  Section \ref{Sec: Combined} combines the two
models for a speculative unified model of any number of droplets in an annulus and conjectures the behavior for
multiple droplets in the chaotic regime.  Finally, in Sec. \ref{Sec: Conclusion} we close on some remarks about how
such a simple model can describe such complex phenomena.

\section{Modeling of a single droplet}\label{Sec: Single}

In the standard map\cite{Chirikov1969, Chirikov1979},
\begin{equation}
\begin{split}
p_{n+1} &= p_n + K\sin(\theta_n) \mod 2\pi,\\
\theta_{n+1} &= \theta_n + p_{n+1} \mod 2\pi;
\end{split}
\label{Eq: StandardMap}
\end{equation}
a ``particle'' receives a ``kick'' at each discrete timestep $n$, where $\theta$ represents the
position of the particle and $p$ represents the momentum, and $K$ is the kicking strength.
The map becomes chaotic when $K$ is near unity.

Similarly, a droplet on an annulus receives a kick at each impact with its wavefield.  Unlike
the standard map, the amplitude of the kick will be inversely proportional to the distance
traveled since the wavefield exhibits spatial and temporal decay.  Further, since the experiments
of Filoux \ea \cite{FHV15} measure changes in velocity, we are interested in modeling the
velocity rather than the momentum.  While the model of Filoux \ea \cite{FHV15} is not a
dynamical system, it gives us a good starting point for the velocity contributions from
impacts with the droplet's wavefield.
They write the velocity equation for multiple droplets as
\begin{equation}
\begin{split}
&u_{n+1}^i - u_n^i = -\gamma u_n^i - C_0\frac{\partial \zeta^{ii}}{\partial s}\bigg|_{n+1}
- C_1\sum_{j\neq i} \frac{\partial \zeta{ij}}{\partial s}\bigg|_{n+1},\\
&\zeta^{ij} = \zeta_0\cos\left(\frac{2\pi(s_{n+1}^i - s_n^j)}{\lambda_F}\right)
\exp\left(-\frac{s_{n+1}^i - s_n^j}{C_2\lambda_F}\right);
\end{split}
\end{equation}
where $u_n^i$ and $s_n^i$ are the velocity and position of the $i^\th$ droplet at the
$n^\th$ impact.  Damping of the velocity contribution from the previous time step is
represented by $\gamma$. The parameter $C_0$ is the contribution from the $i^\th$
droplet to its own velocity and $C_1$ is the contribution from all other droplets to the
$i^\th$ droplet's velocity. The wavefield is given by the second equation, where $\zeta$
is the wave profile, $\lambda_F$ is the Faraday wavelength, and $C_2$ is a constant
used as a tuning parameter.

The model in this article is similar in that it uses a sine term for the shape of the velocity
profile and an exponential term for its damping, however our argument in the exponent is
squared instead of an absolute value since the absolute value does not have a continuous
derivative.  Unlike the model of Filoux \ea \cite{FHV15},
where they were interested in the steady state of the velocity, this one is a dynamical system
that describes the changes in velocity.  Like the standard map \cite{Chirikov1969, Chirikov1979}
it is assumed that the droplet receives a ``kick'' at each impact $n$.  Other assumptions include, a constant
damping, $C \in [0,1]$, of the velocity due to the inertia of the droplet\cite{Shirokoff13, Gilet14, Gilet16},
and a constant undamped kick strength of $K\in \R^+$.  Furthermore, we use nondimensional variables
and parameters from the onset.  Now we may write the model for velocity,
\begin{equation}
v_{n+1} = C\left[v_n + K\sin(\omega(\theta_n - \theta_{n-1}))e^{-\nu (\theta_n - \theta_{n-1})^2}\right],
\label{Eq: SingleVelocity}
\end{equation}
where $v_n$ and $\theta_n$ are the velocity and positions of the single droplet after the $n^\th$ impact.
The nondimensional frequency is $\omega = 2\pi(R_{\text{in}} + D/2)/\lambda_F$\cite{FHV15}, where
$\lambda_F \approx 6 mm$ is the Faraday wavelength and $R_{\text{in}} + D/2
= 13.75 mm, 41.25 mm, 68.75 mm$ is the average of the inner and outer radii from the experiments of
Filoux \ea \cite{FHV15}.  The spatial damping of the kick strength is represented by the
exponential term with a damping parameter $\nu\in \R^+$, which implicitly depends on $\lambda_F$
and the memory of the system.  For the simulations, $\omega$ is fixed in the experimental range of
Filoux \ea\cite{FHV15}.  For the argument in the exponent, if we nondimensionalize that of Filoux \ea
and evaluate it at a spacing of $\lambda_F$, we get $-1/2.1$\cite{FHV15}.  In order to reproduce
this in our formulation we let $\nu = \omega^2/2.1\cdot 4\pi^2$.

As in the standard map \eqref{Eq: StandardMap}, $\theta_{n+1} = \theta_n + v_{n+1}$
so the right hand side of \eqref{Eq: SingleVelocity} can be represented by the function
$f:\S\rightarrow\S$ defined as
\begin{equation}
f(v) := C\left[v + K\sin(\omega v)e^{-\nu v^2}\right].
\label{Eq: SingleMap}
\end{equation}
The amplitude of $f$ should decay as $v$ increases; i.e.
the farther a droplet travels on the previous bounce, the smaller its displacement
will be on the next bounce.  Further, in order for the model to be a dynamical
system on $\S$, the end points $-\pi$ and $\pi$ need to be identified (i.e., $f(-\pi) = f(\pi)$
and $f'(-\pi) = f'(\pi)$), which
is done first by observing
\begin{equation*}
\frac{f(-\pi)}{C} = -\pi + K\sin(-\pi\omega)e^{-\nu\pi^2} = -\pi - K\sin(\pi\omega)e^{-\nu\pi^2}
\end{equation*}
and
\begin{equation*}
\frac{f(\pi)}{C} = \pi + K\sin(\pi\omega)e^{-\nu\pi^2},
\end{equation*}
then
\begin{equation}
K = \frac{-\pi e^{\nu\pi^2}}{\sin(\pi\omega)}.
\end{equation}
We notice that this choice of $K$ gives $f(-\pi) = f(\pi) = 0$, which ensures that the amplitude
of $f$ will be a decreasing
function that vanishes to zero since it is continuous.
Next the full model can be written as
\begin{equation}
\begin{split}
v_{n+1} &= C\left[v_n -\pi\frac{\sin(\omega v_n)}{\sin(\pi\omega)}e^{(\omega^2/8.4\pi^2)(\pi^2-v_n^2)}\right],\\
\theta_{n+1} &= \theta_n + v_{n+1}.
\end{split}
\label{Eq: SingleModel}
\end{equation}
Then the only free parameter is $C \ll 1$.  If $C = 0$ the droplet remains in one spot, however
if $C \sim 1$, the kick strength is too large and the droplet travels farther
than physically possible.  It is observed that $C \sim 1/K$ is ideal for reproducing the physical
phenomena. 
For this choice of $C$, there is a certain amount of
momentum preserved from the previous impacts, however the largest contribution is from the kick
of the current impact, which is also a reasonable physical implication.

We note that the entire system \eqref{Eq: SingleModel} only has fixed points when $v = 0$;
in which case, there exists a line of fixed points $\theta_* = \theta$ for all $\theta \in \S$.
From a dynamical systems point of view, it is more interesting to study the one-dimensional system for
velocity \eqref{Eq: SingleMap}.  While the fixed points themselves cannot be found analytically,
they can be shown to exist and computed numerically.  Furthermore, we can develop some
abstractions to be employed in the analysis of the model.

\subsection{Simulations and comparisons}

For each of the simulations we use $\omega = 31/2$, which is within the experimental range of
Filoux \ea \cite{FHV15}.  Further, we write $C$ in terms of $K$ in order to facilitate comparisons
and future asymptotic analysis.  For $C \ll 1/K$ we observe quite regular behavior, but as $C$
increases to $C \sim 1/K$, the system becomes seemingly chaotic.  This is to be expected since
the standard map \eqref{Eq: StandardMap}\cite{Chirikov1969, Chirikov1979} starts to become
chaotic when the kick is near unity.  In the model presented here \eqref{Eq: SingleModel}, the
corresponding kick strength is $C\cdot K$.

First we notice that the speed goes to zero, as expected, if the damping of the velocity is large enough,
shown in Fig. \ref{Fig: Vanishing}, where $C = 1/100K$.  

\begin{figure}[htbp]
\centering

\stackinset{r}{1mm}{t}{1mm}{\textbf{\large (a)}}{\includegraphics[width = 0.44\textwidth]{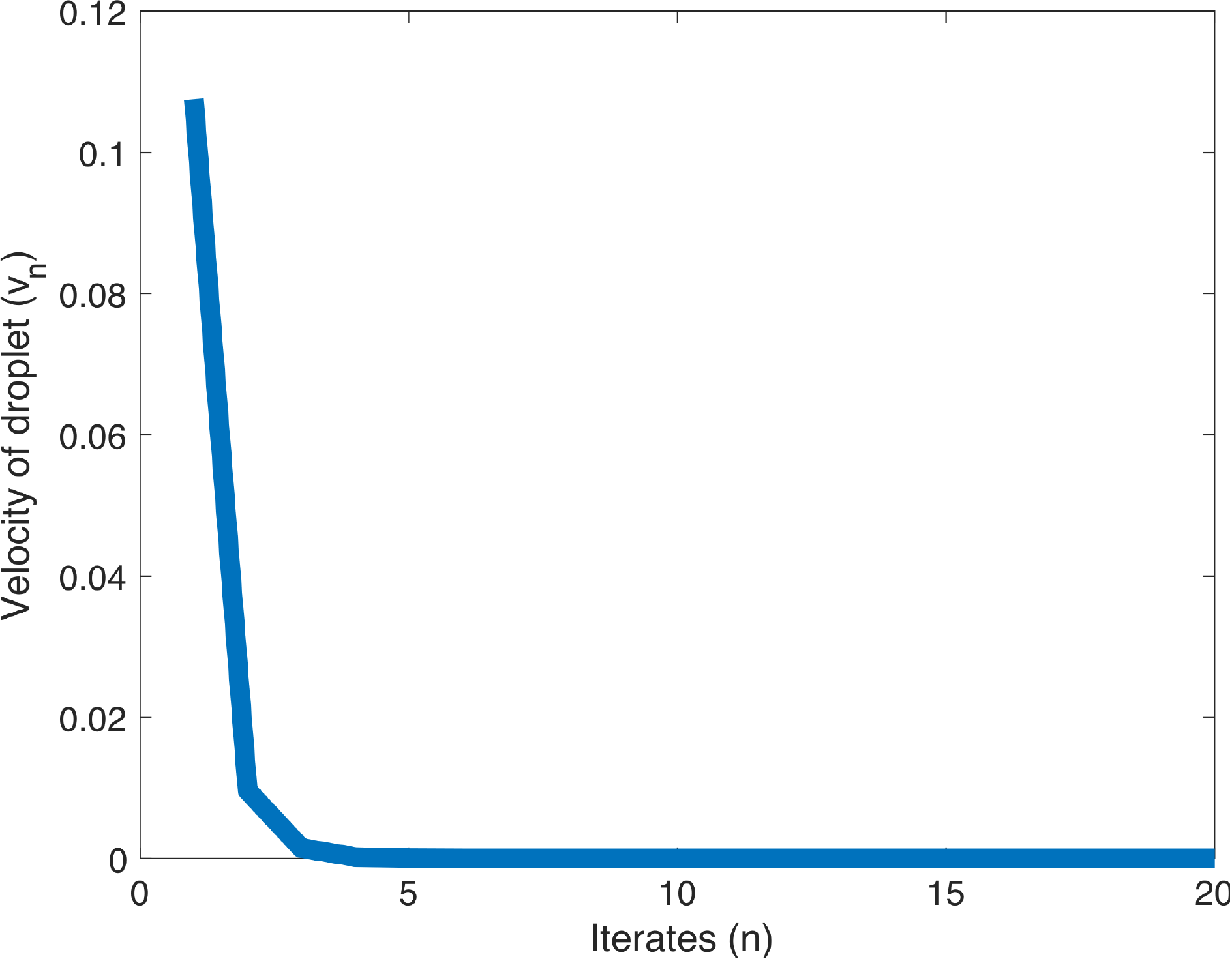}}
\refstepcounter{subfigure}\label{Fig: VanishingSpeed}
\stackinset{r}{1mm}{b}{6mm}{\textbf{\large (b)}}{\includegraphics[width = 0.44\textwidth]{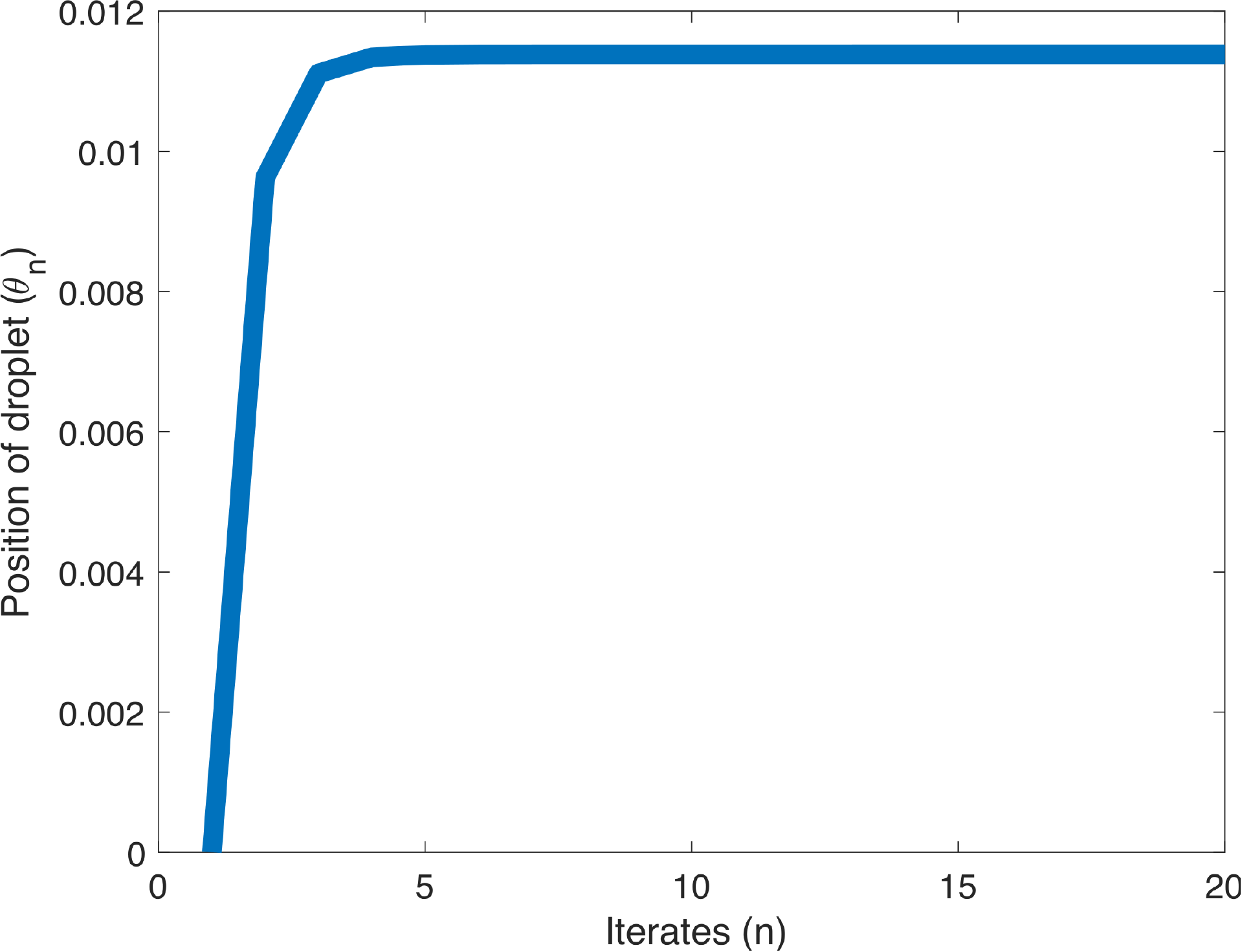}}
\refstepcounter{subfigure}\label{Fig: VanishingTheta}

\caption{\textbf{(a)}  The velocity converging to zero when $C = 1/100K$.  \textbf{(b)}  As the velocity
decreases to zero, the droplet gets stuck at a fixed location in $\theta$.}
\label{Fig: Vanishing}
\end{figure}

Then as $C$ is increased the droplet starts walking with constant speed, shown in
Fig. \ref{Fig: Constant}, where $C = 1/10K$.  This is also reported in experiments
of a single walker on an annulus below the Faraday wave threshold \cite{FHV15, MilesPrivate}.

\begin{figure}[htbp]
\centering

\stackinset{r}{1mm}{t}{1mm}{\textbf{\large (a)}}{\includegraphics[width = 0.448\textwidth]{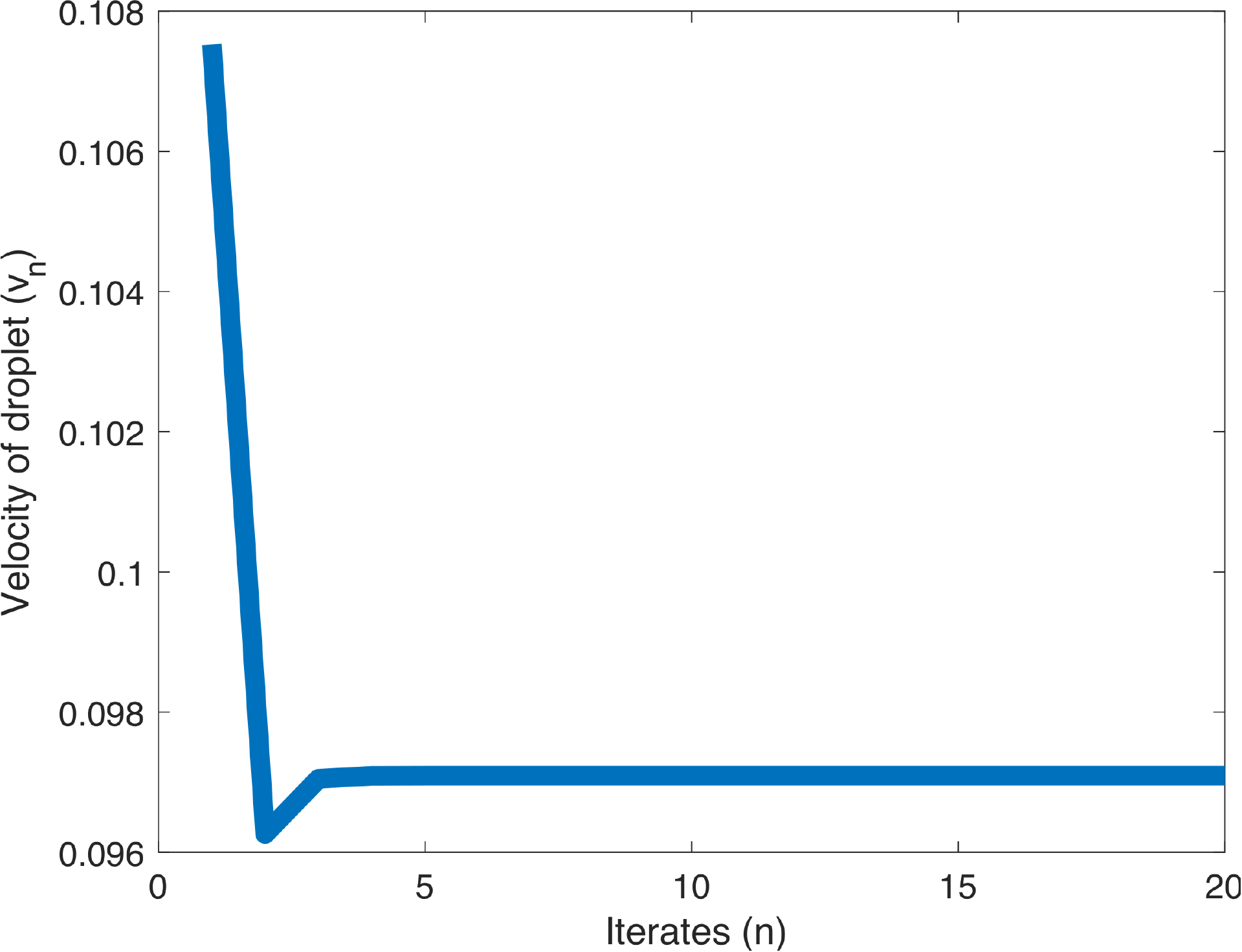}}
\refstepcounter{subfigure}\label{Fig: ConstantSpeed}
\stackinset{l}{6mm}{t}{1mm}{\textbf{\large (b)}}{\includegraphics[width = 0.432\textwidth]{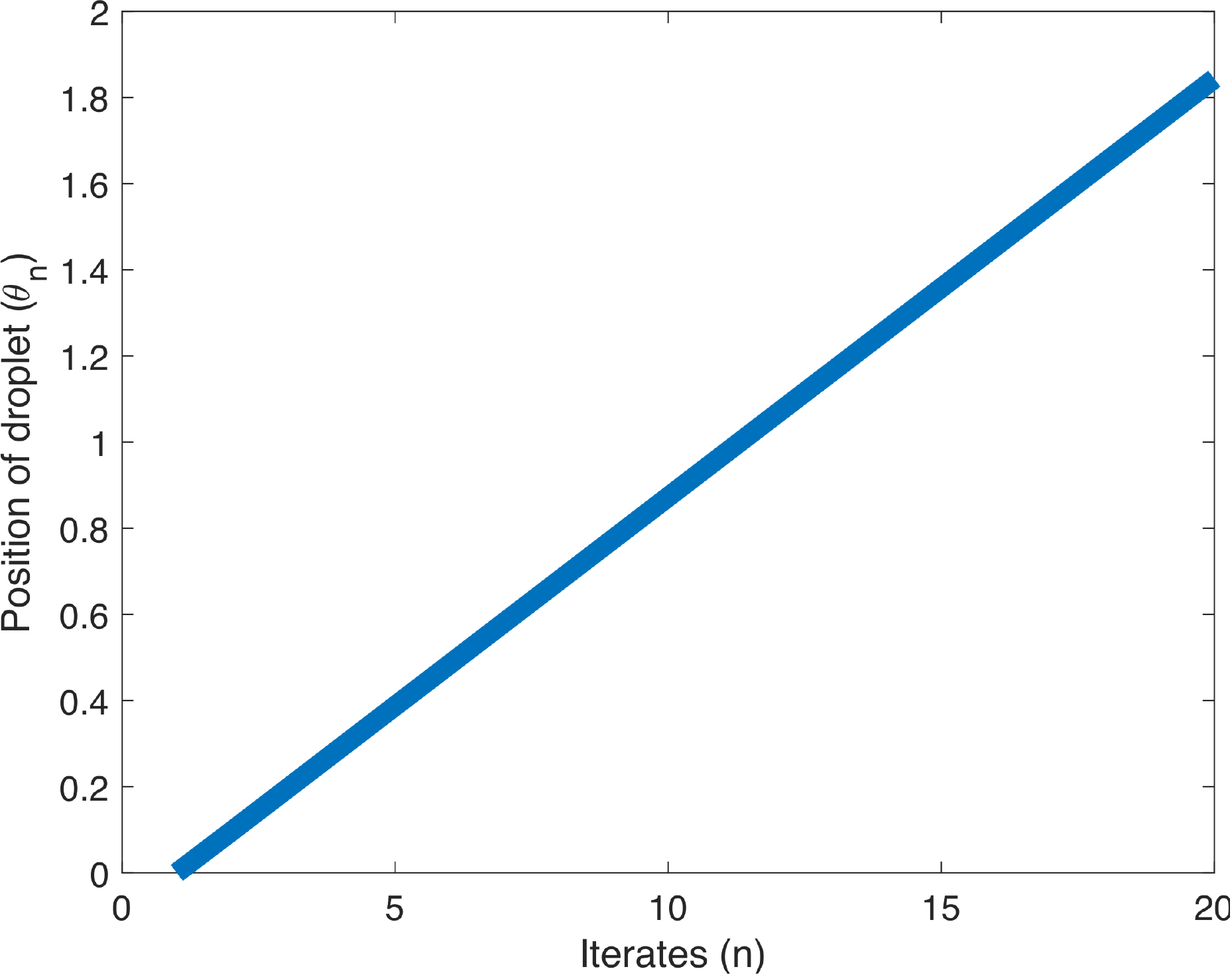}}\refstepcounter{subfigure}\label{Fig: ConstantTheta}

\caption{\textbf{(a)}  The velocity converging to a steady-state when $C = 1/10K$.
\textbf{(b)}  As the velocity converges to a constant, the walker moves around the annulus
\textit{ad infinitum}.}
\label{Fig: Constant}
\end{figure}

The truly interesting behavior arises when $C$ is increased to near $1/K$.  In Fig. \ref{Fig: ChaoticShortTime},
nine iterates are shown to illustrate the sudden changes in direction.  These types of direction changes continue
as the walker traverses the entire domain creating seemingly dense orbits on $\S$ in Fig. \ref{Fig: ChaoticLongTime}.
The behavior has also been observed in the recent experiments of Pucci and Harris \cite{PucciHarrisPrivate}.
While Figs. \ref{Fig: ChaoticShortTime} and \ref{Fig: ChaoticLongTime} show irregular behavior of the walker,
the real evidence of chaos is illustrated in Fig. \ref{Fig: ChaoticTimeSeries}, which shows the seemingly random
changes in velocity of the walker.

\begin{figure}[htbp]
\centering
\includegraphics[width = 0.9\textwidth]{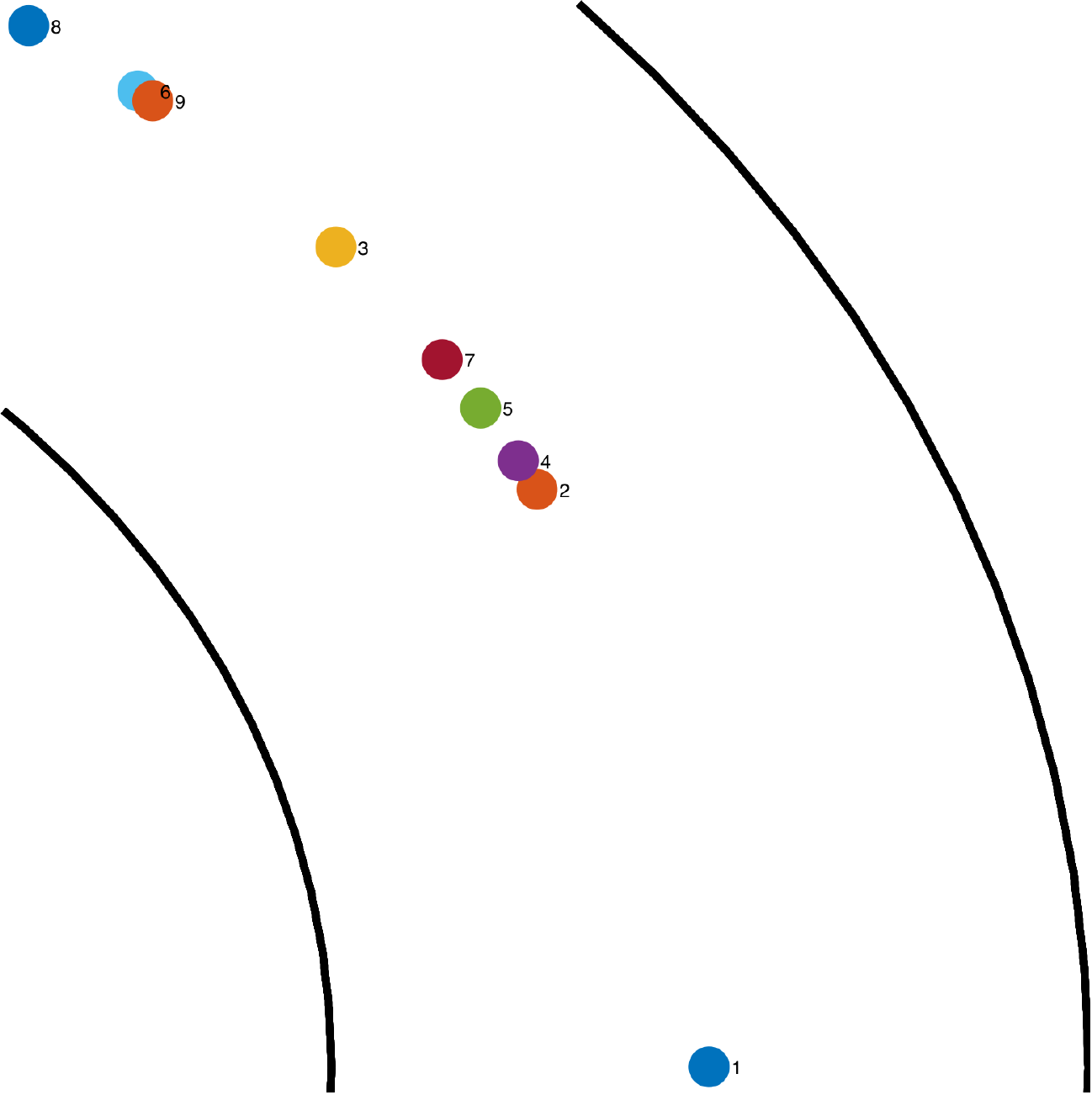}
\caption{Closeup of nine iterates of \eqref{Eq: SingleModel} representing the short time dynamics of
the droplets for $C = 1/2K$.  The iterate number is to the right of each marker.}
\label{Fig: ChaoticShortTime}
\end{figure}

\begin{figure}[htbp]
\centering

\stackinset{l}{1mm}{t}{1mm}{\textbf{\large (a)}}{\includegraphics[width = 0.4\textwidth]{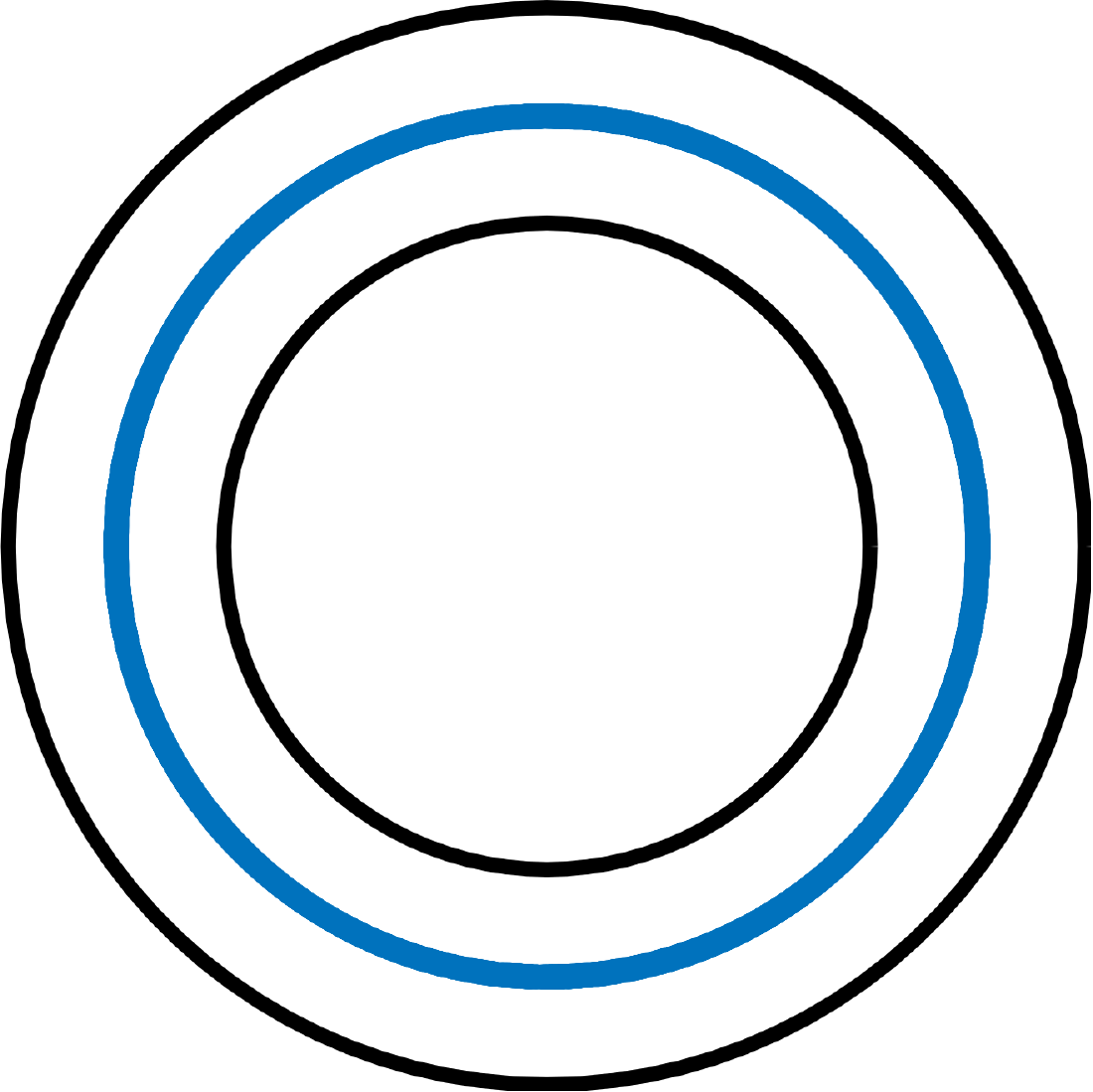}}
\refstepcounter{subfigure}\label{Fig: LongTimeChaoticSpeed}
\stackinset{r}{0.25mm}{t}{4mm}{\textbf{\large (b)}}{\includegraphics[width = 0.48\textwidth]{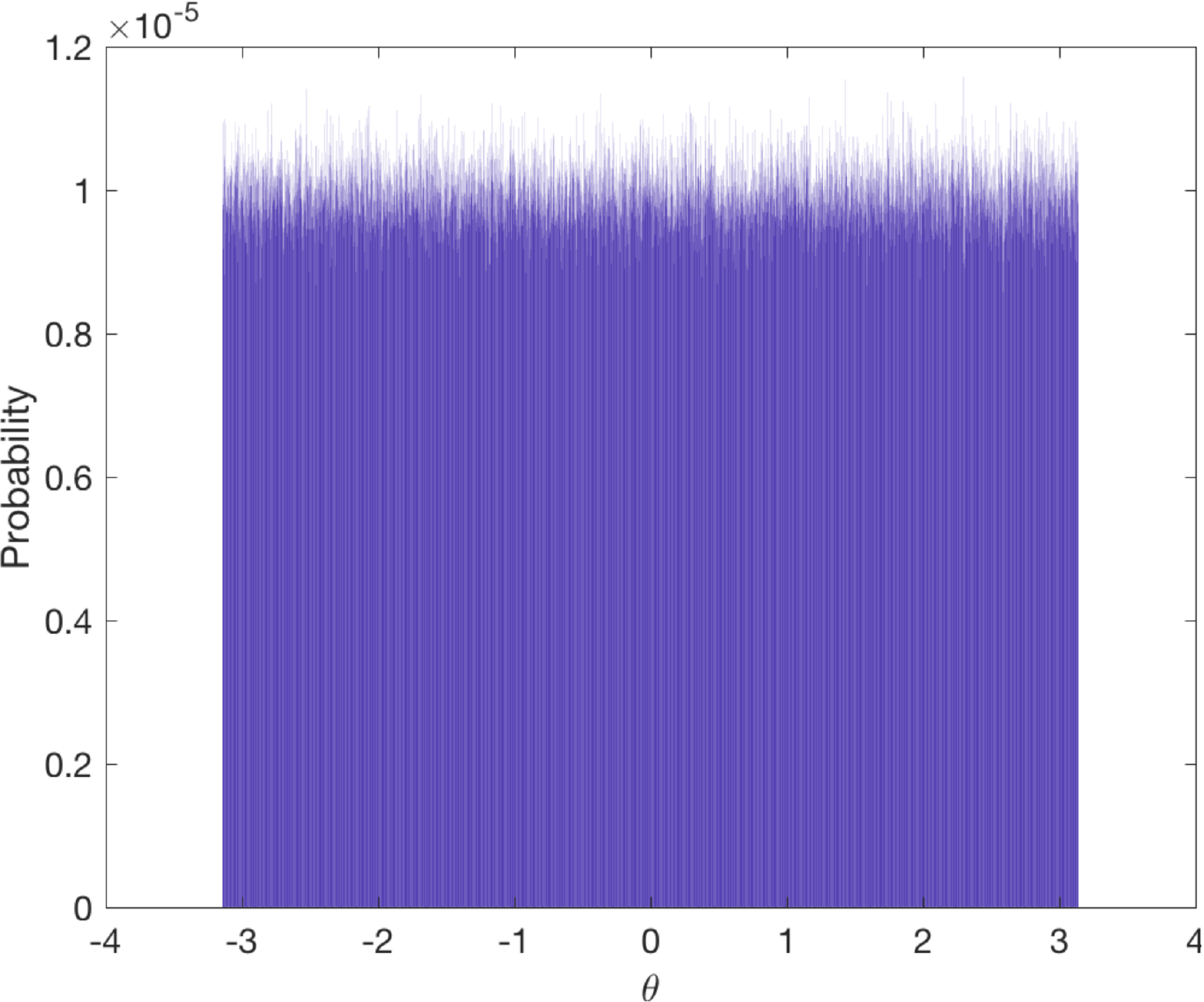}}
\refstepcounter{subfigure}\label{Fig: HistogramChaoticSpeed}

\caption{Observation of seemingly dense orbits in $\S$, which remains to be proved.
\textbf{(a)}  Long time dynamics of \eqref{Eq: SingleModel} for $C = 1/K$ after $10^8$ impacts.  
\textbf{(b)}  The probability of a walker being present in a certain interval $[\alpha,\beta]$ for the histogram is
taken at each interval of size $2\pi/10^5$.}
\label{Fig: ChaoticLongTime}
\end{figure}

\begin{figure}[htbp]
\centering
\includegraphics[width = 0.9\textwidth]{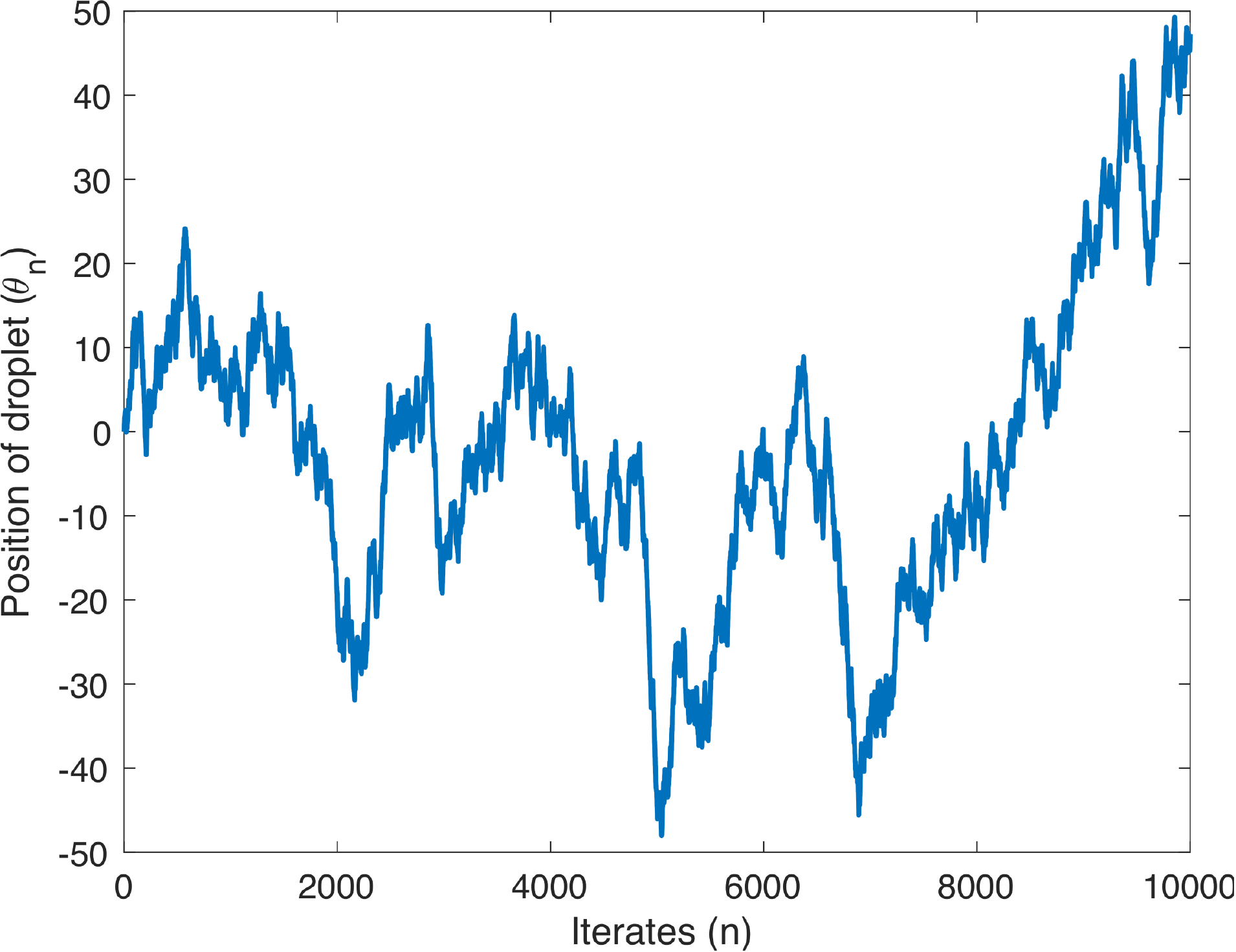}
\caption{Timeseries plot of the drop position in \eqref{Eq: SingleModel} for $C = 1/K$ showing
evidence of chaos.}
\label{Fig: ChaoticTimeSeries}
\end{figure}

\section{Bifurcations and chaos}\label{Sec: Analysis}

Here we analyze \eqref{Eq: SingleMap} the main dynamical model of this investigation.

Let us denote fixed points as $v_*$ and critical bifurcation parameters as $C_*$.
Let us also use the same parameter values for $K$, $\omega$, and $\nu$ as in
Sec. \ref{Sec: Single}.
It is easy to see that $v_* = 0$ is always a fixed point of the system.  For $C \leq C_*$,
zero is the only fixed point (Fig. \hyperref[Fig: 1fp]{\ref{Fig: Pitchfork}a}).  For $C = C_*$
it may seem like Fig. \hyperref[Fig: Linefp]{\ref{Fig: Pitchfork}b} shows a line of fixed
points, however it is in fact just the single fixed point at the origin, which now becomes a saddle.
Finally, for $C > C_*$ two additional fixed points appear (Fig. \hyperref[Fig: 2fp]{\ref{Fig: Pitchfork}c}),
which indicates that there may be a pitchfork bifurcation present.  While it is impossible to solve
for the additional fixed points in closed form, we may still find our bifurcation parameter and
analyze the stability of fixed points by studying the derivative,
\begin{equation}
\label{Eq: Derivative}
f'(v) = C + CK[\omega\cos(\omega v) - 2\nu v\sin(\omega v)]e^{-\nu v^2}.
\end{equation}
Notice that $f'(0) = 1$ when $C = C_* = 1/(1+K\omega)$, which is when the stability of $v_* = 0$
changes from attracting to saddle to repelling.  Further, this value of $C_*$ was confirmed by numerically finding the
bifurcation parameter as the origin changes stability.  This is only possible when the slope of \eqref{Eq: SingleMap}
at $v = 0$ changes from negative to positive, which forces the addition of two fixed points; i.e., a pitchfork bifurcation.

\begin{figure*}[htbp]
\centering

\stackinset{l}{}{t}{}{\textbf{\large (a)}}{\includegraphics[width = 0.32\textwidth]{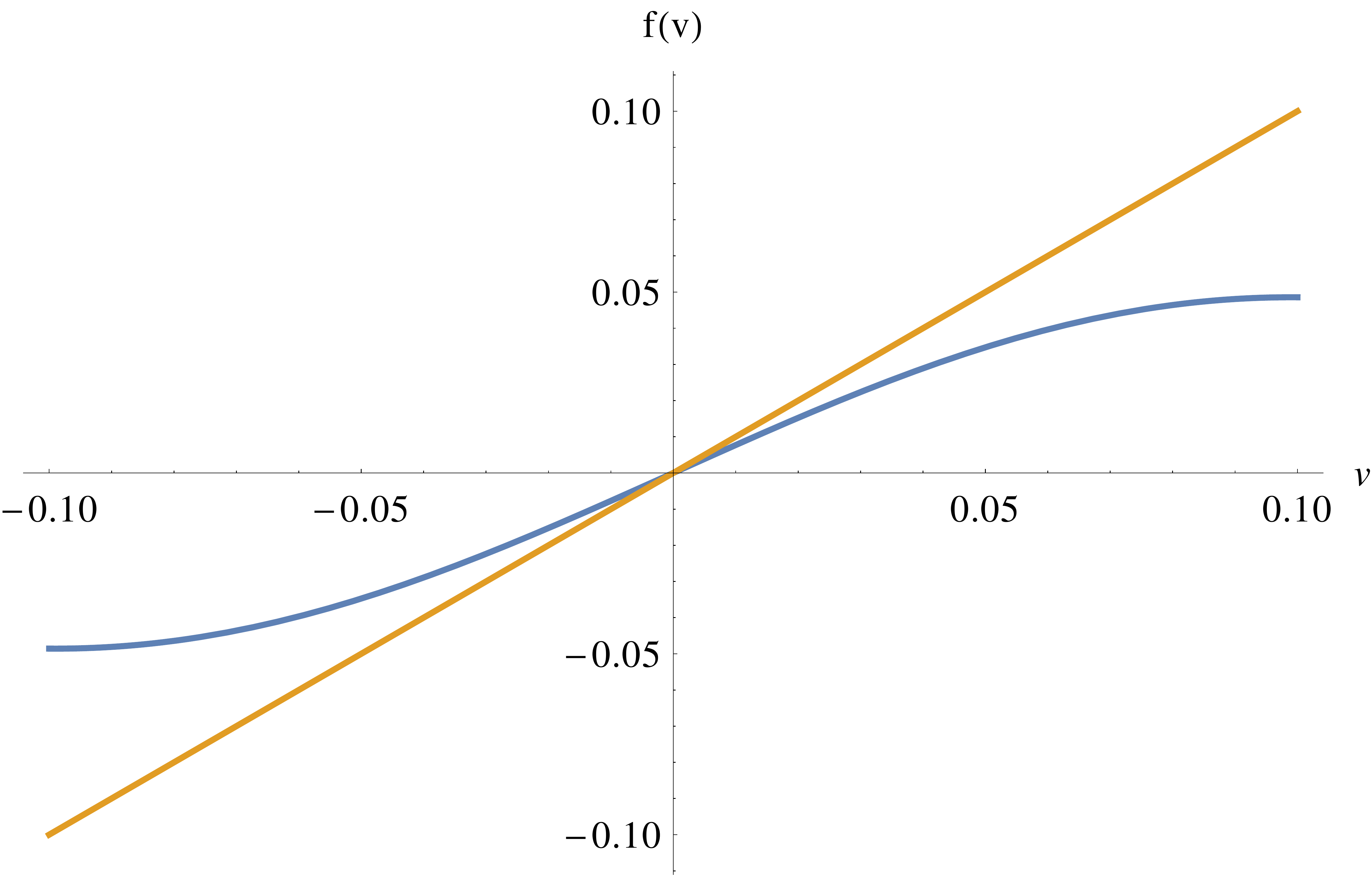}}
\refstepcounter{subfigure}\label{Fig: 1fp}
\stackinset{l}{}{t}{}{\textbf{\large (b)}}{\includegraphics[width = 0.32\textwidth]{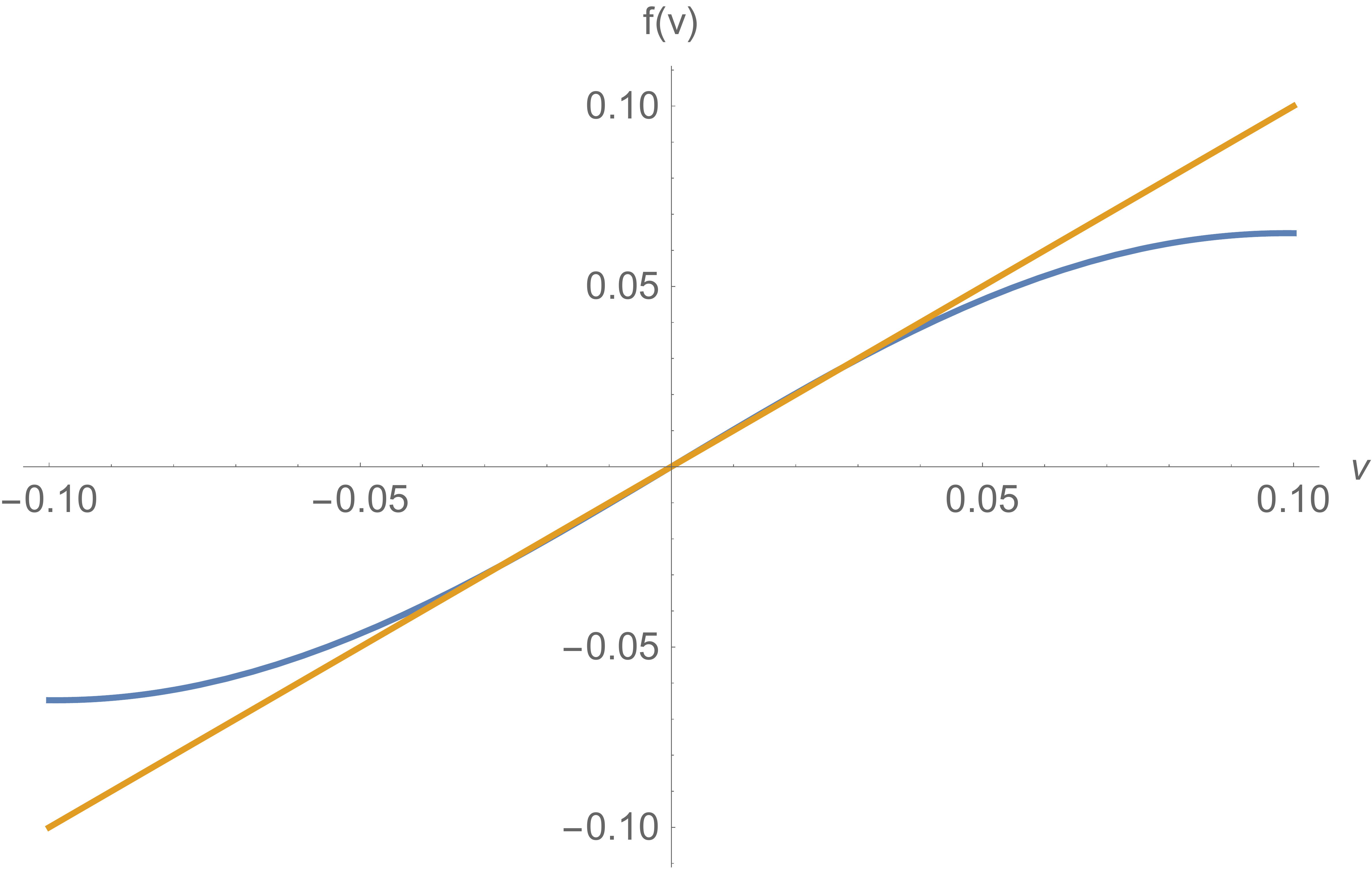}}
\refstepcounter{subfigure}\label{Fig: Linefp}
\stackinset{l}{}{t}{}{\textbf{\large (c)}}{\includegraphics[width = 0.32\textwidth]{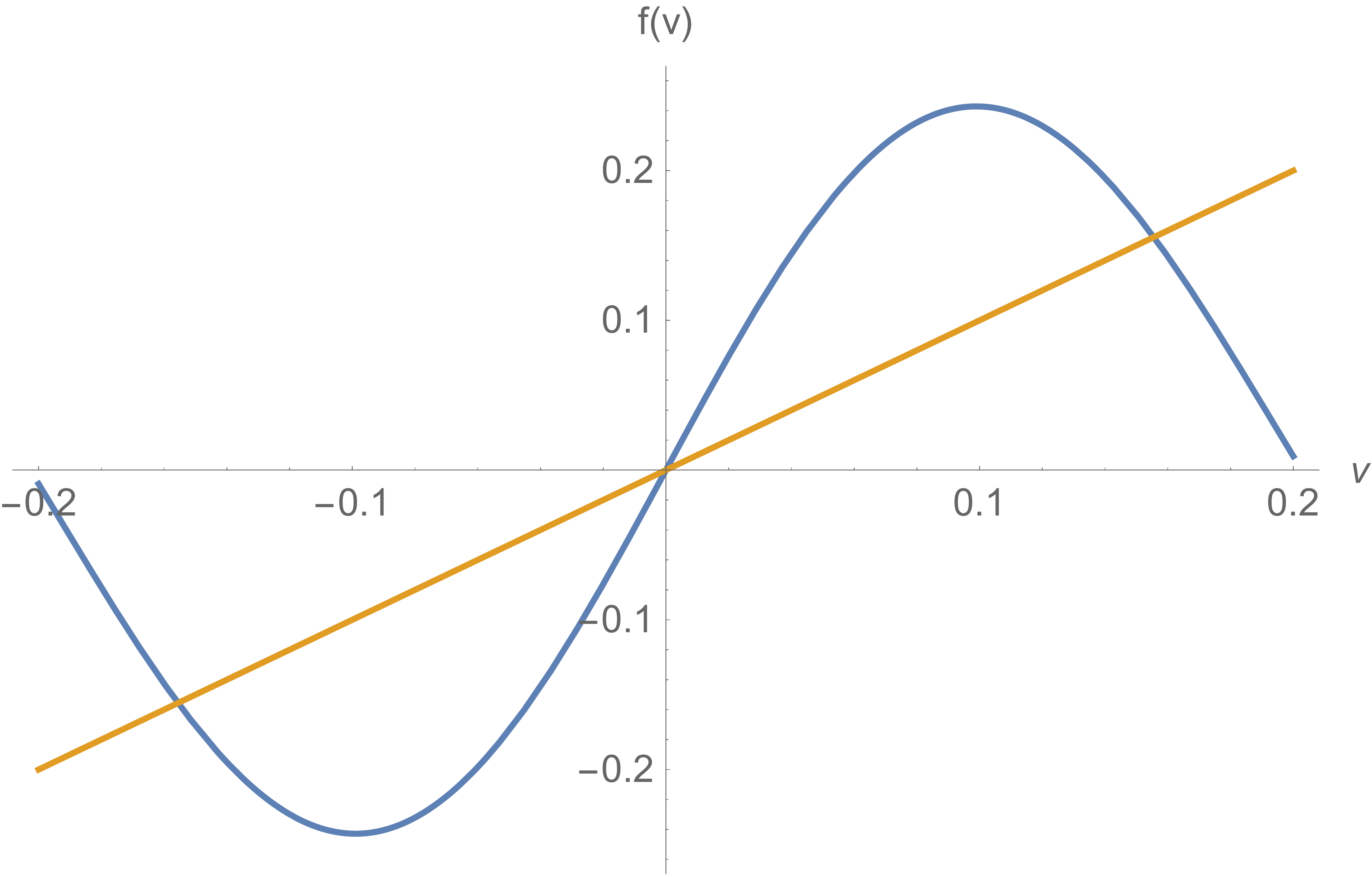}}
\refstepcounter{subfigure}\label{Fig: 2fp}

\caption{As $C$ is increased two additional fixed points appear showing evidence of a pitchfork bifurcation.
Here the blue curve represents $y = f(v)$ and the orange line represents $y = v$.  The intersections indicate
fixed points.  \textbf{(a)}  One attracting fixed point $v_* = 0$ when $C = 1/20K < C_*$.  \textbf{(b)}  One
saddle fixed point $v_* = 0$ when $C = C_* \approx 1/15K$.  \textbf{(c)} 
Two additional fixed points appear when $C = 1/4K > C_*$.}
\label{Fig: Pitchfork}
\end{figure*}

As mentioned it would not be possible to solve for the additional fixed points, but we can prove
their existence, in some parameter regime $C_* < \hat{C} < C < 1$ where
\begin{equation}
\hat{C} = \frac{1}{1 + K\omega \frac{2}{\pi}\exp\left(-\nu \frac{\pi^2}{4\omega^2}\right)}
\end{equation}
by using some basic calculus, which will eventually help us prove the existence of
fixed points of $f^3$; i.e., period-3 orbits.
\begin{lemma}
\label{Lemma: FP}
There exists a fixed point $v_*$ of \eqref{Eq: SingleMap} between $v = \pi/2\omega$ and $v = \pi/\omega$
for $C_* < \hat{C} < C < 1$.  Further, by symmetry, there is also a fixed point $-v_*$.
\end{lemma}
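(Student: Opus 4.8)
The plan is to locate the fixed point via the Intermediate Value Theorem applied to the displacement function $g(v) := f(v) - v$, where $f$ is the velocity map \eqref{Eq: SingleMap}. Since $f$ is built from sines, a Gaussian, and polynomials, $g$ is continuous on the closed interval $[\pi/2\omega,\,\pi/\omega]$, so it suffices to show $g$ changes sign between the two endpoints. Writing $g(v) = (C-1)v + CK\sin(\omega v)e^{-\nu v^2}$ makes the endpoint evaluations transparent, because on this interval the argument $\omega v$ runs from $\pi/2$ to $\pi$, so the sine factor decreases from $1$ to $0$.

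First I would evaluate at the right endpoint $v = \pi/\omega$, where $\sin(\omega v) = \sin\pi = 0$. The kick term vanishes entirely, leaving $g(\pi/\omega) = (C-1)\pi/\omega$. Because the hypothesis includes $C < 1$, this value is strictly negative. This is the step where the upper bound on $C$ is genuinely used.

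Next I would evaluate at the left endpoint $v = \pi/2\omega$, where $\sin(\omega v) = \sin(\pi/2) = 1$, giving $g(\pi/2\omega) = (C-1)\frac{\pi}{2\omega} + CK\,e^{-\nu\pi^2/4\omega^2}$. A short rearrangement shows this is positive exactly when $C\bigl(1 + K\frac{2\omega}{\pi}e^{-\nu\pi^2/4\omega^2}\bigr) > 1$, i.e. precisely when $C > \hat{C}$ with $\hat{C}$ as defined in the statement. Hence the hypothesis $\hat{C} < C$ makes $g(\pi/2\omega) > 0$. With $g$ continuous, positive at $\pi/2\omega$, and negative at $\pi/\omega$, the Intermediate Value Theorem produces a root $v_* \in (\pi/2\omega,\,\pi/\omega)$, which is the claimed fixed point.

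For the companion fixed point I would use that $f$ is odd: both $v \mapsto v$ and $v \mapsto \sin(\omega v)$ are odd while $e^{-\nu v^2}$ is even, so $f(-v) = -f(v)$. Consequently $f(v_*) = v_*$ forces $f(-v_*) = -v_*$, yielding the fixed point $-v_*$ by symmetry. The argument is essentially elementary, so there is no serious obstacle; the only points requiring care are the algebraic identification of the left-endpoint positivity threshold with the stated $\hat{C}$, and confirming the consistency of the chain $C_* < \hat{C} < C < 1$. The latter follows since $\frac{2}{\pi}e^{-\nu\pi^2/4\omega^2} < 1$ enlarges the denominator relative to $C_* = 1/(1+K\omega)$, forcing $\hat{C} > C_*$, so the claimed ordering is automatic and both bounds on $C$ are used exactly once.
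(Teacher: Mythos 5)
Your proposal is correct and follows essentially the same route as the paper: evaluate at the endpoints $v=\pi/2\omega$ (where $\sin(\omega v)=1$ and the condition $C>\hat{C}$ gives $f(v)>v$) and $v=\pi/\omega$ (where the kick vanishes and $C<1$ gives $f(v)<v$), then apply the intermediate value theorem, with oddness of $f$ supplying $-v_*$. Your rearrangement via $g(v)=f(v)-v$ is just a cleaner presentation of the paper's inequality chain, and your explicit identification of the left-endpoint threshold with $\hat{C}$ matches the paper's computation exactly.
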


\begin{proof}
Due to symmetry it suffices to prove everything in terms of the positive fixed point.

First we show $f(v) > v$ when $v = \pi/2\omega$.
Notice that $\sin(\omega v) = 1$ when $v = \pi/2\omega$, then
\begin{align*}
f(v) = &C\left[v + Ke^{-\nu v^2}\right] > \frac{v + Ke^{-\nu v^2}}{1 + K\omega \frac{2}{\pi}\exp\left(-\nu \frac{\pi^2}{4\omega^2}\right)}\\
& \geq v\frac{1 + Ke^{-\nu v}/v}{1 + K\omega\frac{2}{\pi}\exp\left(-\nu \frac{\pi^2}{4\omega^2}\right)} = v.
\end{align*}

Next we show $f(v) < v$ when $v = \pi/\omega$.  Notice that $\sin(v) = 0$ when $v = \pi/\omega$, then
$f(v) = Cv < v$ for $C < 1$.  

Since $f \in C([\pi/2\omega,\pi/\omega])$ ($f$ is continuous on the interval
$[\pi/2\omega,\pi/\omega]$), by the intermediate value theorem, $f(v) = v$ for some
$v \in [\pi/2\omega,\pi/\omega]$, thereby completing the proof.
\end{proof}

While the statement of Lemma \ref{Lemma: FP} is weaker than we would like since we require the
use of $\hat{C} > C_*$, it should be noted that for our parameter values
$\hat{C} - C_* \approx 4.8\times 10^{-15} < 2C_*/3$.
It should also be noted that this does not prove uniqueness.
Fortunately, in order to prove the existence of a pitchfork bifurcation we need only analyze the behavior
about $v_* = 0$.

For the next few results we use standard bifurcation analysis as presented in \cite{Kuznetsov}.
Ahead of these results, it is also useful to compute the following derivatives,
\begin{align}
\frac{\partial}{\partial C}f_{C}'(v) = &1 + K[\omega\cos(\omega v)\nonumber \\
&- 2\nu v\sin(\omega v)]e^{-\nu v^2},\label{Eq: Partial_Cv}\\
f''_{C}(v) = &CK\left[(4\nu^2v^2 - 2\nu - \omega^2)\sin(\omega v)\right.\nonumber \\ 
&\left. - 4\nu\omega v\cos(\omega v)\right]e^{-\nu v^2}\label{Eq: SecondDerivative}\\
f'''_{C}(v) = &CK\left[(12\nu^2\omega v^2 - 6\omega v - \omega^2)\cos(\omega v)\right.\nonumber\\
&+(4\nu\omega + 12\nu^2 + 8\nu^2\omega v\nonumber\\
&\left. - 16\nu^4v^3)v\sin(\omega v)\right]e^{-\nu v^2}.\label{Eq: ThirdDerivative}
\end{align}
for which $f'$ represents $\partial f/\partial v$.

\begin{lemma}
\label{Lemma: Pitchfork}
The map \eqref{Eq: SingleMap} is generic about the fixed point $v_* = 0$ and a supercritical pitchfork bifurcation
occurs when $C = C_* = 1/(1+K\omega)$.
\end{lemma}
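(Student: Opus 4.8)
The plan is to read \eqref{Eq: SingleMap} as an odd ($\mathbb{Z}_2$-symmetric) one-parameter family and then verify the nondegeneracy and transversality conditions for a pitchfork of a scalar map, as catalogued in \cite{Kuznetsov}. The essential structural observation is that $f(-v) = -f(v)$ for every $C$: the linear term and $\sin(\omega v)$ are odd while the Gaussian $e^{-\nu v^2}$ is even, so their product is odd. Since $0$ lies in the interior of the fundamental domain $[-\pi,\pi]$, the local analysis at $v_*=0$ is identical to one on $\R$, and the odd symmetry is exactly what turns a generic fold into a pitchfork: it forces $v_*=0$ to be a fixed point for all $C$ and makes every even-order $v$-derivative at the origin vanish identically in $C$. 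First I would record this symmetry and note that it kills the quadratic term governing a saddle-node, so the leading nonlinearity is cubic.

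With the symmetry in hand I would check the remaining conditions at $(v_*,C)=(0,C_*)$. The critical-multiplier condition follows from \eqref{Eq: Derivative}, which gives $f'(0) = C(1+K\omega)$; this equals $1$ precisely when $C = C_* = 1/(1+K\omega)$, recovering the stated value. Transversality is read off from \eqref{Eq: Partial_Cv}: at $v=0$ it yields $\partial_C f'(0)|_{C_*} = 1 + K\omega > 0$, so the multiplier crosses $+1$ with nonzero speed as $C$ increases and the origin passes from attracting to repelling. The vanishing of the quadratic term is confirmed directly from \eqref{Eq: SecondDerivative}, where every summand carries a factor $\sin(\omega v)$ or $v\cos(\omega v)$, whence $f''(0)=0$ for all $C$.

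The decisive step is the cubic coefficient, which both establishes genericity (non-degeneracy) and fixes the criticality. Inserting $v=0$ into \eqref{Eq: ThirdDerivative} annihilates every $\sin(\omega v)$ and $v\sin(\omega v)$ term and leaves only the cosine contribution, which is strictly negative for positive $K,\omega,\nu$; thus $f'''(0) < 0$. Writing the local normal form $v \mapsto \mu v + \tfrac{1}{6}f'''(0)\,v^3 + O(v^5)$ with $\mu = C(1+K\omega)$, the nontrivial fixed points solve $v^2 = -6(\mu-1)/f'''(0)$. Because $f'''(0)<0$, these real branches exist only for $\mu>1$, i.e.\ for $C > C_*$ — the side on which the origin has just lost stability — and they inherit that stability; together with the positive transversality coefficient this is precisely a supercritical pitchfork. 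That $f'''(0)\neq 0$ is the genericity assertion, completing the statement. (The explicit value obtained from a direct Taylor expansion is $f'''(0) = -C_*K\omega(\omega^2+6\nu)$, whose sign is all that is needed.)

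The main obstacle is conceptual rather than computational: one must exploit the odd symmetry correctly to rule out a fold and to guarantee that the even-order terms vanish for \emph{all} $C$ and not merely at $C_*$, and then track signs carefully through the normal form so that the negative cubic coefficient combined with the positive transversality coefficient gives the supercritical — rather than subcritical — classification. Once the symmetry is in place, the formulas \eqref{Eq: Partial_Cv}--\eqref{Eq: ThirdDerivative} reduce everything to evaluating trigonometric and Gaussian factors at $v=0$, which is routine.
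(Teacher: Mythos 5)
Your proposal is correct and follows essentially the same route as the paper: evaluate the derivative formulas \eqref{Eq: Derivative} and \eqref{Eq: Partial_Cv}--\eqref{Eq: ThirdDerivative} at $(v_*,C)=(0,C_*)$ to verify the pitchfork normal-form conditions, with the odd symmetry $f(-v)=-f(v)$ made explicit (a useful addition, since it shows the even-order terms vanish for all $C$, not just at $C_*$). The one substantive divergence is the transversality term: you compute $\partial_C f'(0)=1+K\omega\neq 0$, which is what \eqref{Eq: Partial_Cv} actually gives at $v=0$ and is the correct nondegeneracy check, whereas the paper asserts $\partial_C f'_{C_*}(0)=0$ at that step; your version is the defensible one, and your cubic coefficient $-C_*K\omega(\omega^2+6\nu)$ likewise matches a direct Taylor expansion while agreeing with the paper on the decisive sign.
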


\begin{proof}
As shown in text $f_{C_*}(0) = 0$ and from \eqref{Eq: Derivative}
$f_{C_*}'(0) = C_*(1+K\omega) = 1$ if $C_* = 1/(1+K\omega)$.
This shows that a 1-parameter bifurcation occurs for \eqref{Eq: SingleMap}.  Now we show that this bifurcation is a
supercritical pitchfork.

We plug $C_* = 1/(1+K\omega)$ and $v_* = 0$ into \eqref{Eq: Partial_Cv} and \eqref{Eq: SecondDerivative}
to get $\partial_Cf'_{C_*}(0) = 0$ and $f''_{C_*}(0) = 0$.  Next we plug our bifurcation parameter and
fixed point into \eqref{Eq: ThirdDerivative},
\begin{equation*}
f'''_{C_*}(0) = C_*K\left[-\omega^2 - 6\omega\nu\right]  < 0,
\end{equation*}
since all of the parameters are positive.  Thereby completing the proof.
\end{proof}

Using similar calculations to that of Lemma \ref{Lemma: Pitchfork}, one may prove a period doubling
bifurcation for the additional fixed point, which was shown to exist in Lemma \ref{Lemma: FP}.
This period doubling may eventually lead to the existence of chaotic orbits by showing the existence
of of a 3-cycle.  We see evidence of period doubling in Fig. \ref{Fig: Doubling}, where 2-cycles are
illustrated.

\begin{figure}[htbp]
\centering
\includegraphics[width = 0.9\textwidth]{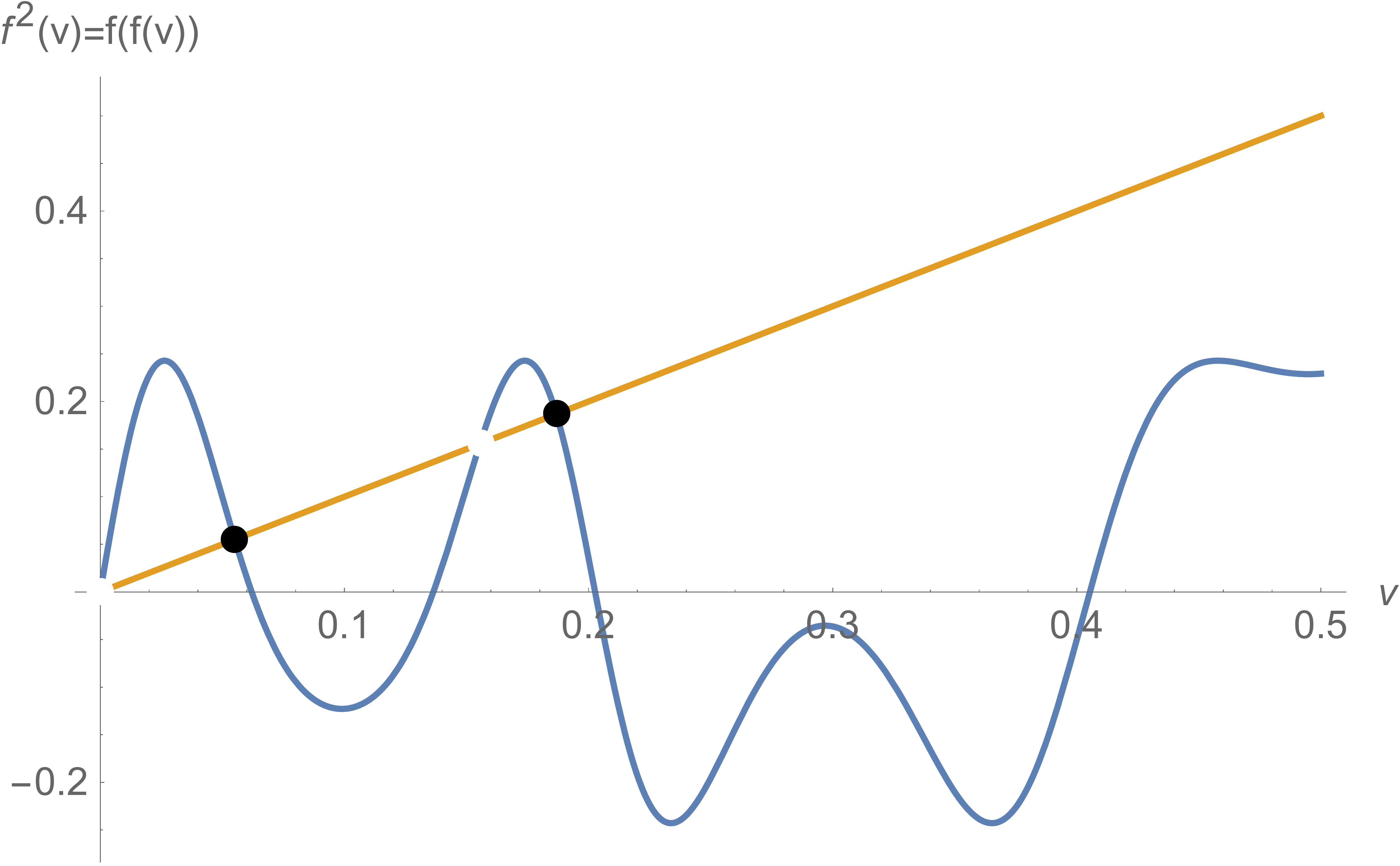}
\caption{In the chaotic regime with $C = 1/4K$, apparent 2-cycles (black markers) are observed, with the
fixed points removed from the plot, providing evidence of a period doubling bifurcation.}
\label{Fig: Doubling}
\end{figure}

For the next theorem we use \eqref{Eq: Partial_Cv}-\eqref{Eq: ThirdDerivative} and refer to
Lemmas \ref{Lemma: FP} and \ref{Lemma: Pitchfork} often.

\begin{thm}
\label{Thm: PeriodDoubling}
There exists a fixed point $v_* \in (\pi/2\omega,\pi/\omega)$ (equivalently for $-v_*$) of the map
\eqref{Eq: SingleMap}, such that $v_*$ undergoes a period doubling bifurcation at the bifurcation
parameter $C = C_{**}$.
\end{thm}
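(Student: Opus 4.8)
The plan is to verify, at the nonzero fixed point $v_*$ of Lemma \ref{Lemma: FP}, the three defining conditions of a flip bifurcation in the sense of \cite{Kuznetsov}: a multiplier crossing $-1$, transversality, and a nonvanishing cubic normal-form coefficient. As in Lemma \ref{Lemma: Pitchfork} I would first promote $v_*$ to a smooth branch $v_*(C)$ via the implicit function theorem (valid while $f'_C(v_*)\neq 1$) and track the multiplier $\mu(C):=f'_C(v_*(C))$. The fixed-point identity $CK\sin(\omega v_*)e^{-\nu v_*^2}=(1-C)v_*$ coming from \eqref{Eq: SingleMap} is the key simplifier: it eliminates the exponential from \eqref{Eq: Derivative} and reduces the multiplier along the branch to $\mu=C+(1-C)\big[\omega v_*\cot(\omega v_*)-2\nu v_*^2\big]$.

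For the multiplier condition I would invoke the intermediate value theorem on $[\hat C,1)$. The proof of Lemma \ref{Lemma: FP} pins $v_*(\hat C)=\pi/2\omega$ exactly, where $\cos(\omega v_*)=0$ and $\sin(\omega v_*)=1$, so \eqref{Eq: Derivative} gives $\mu(\hat C)=\hat C\big(1-K\nu\tfrac{\pi}{\omega}e^{-\nu\pi^2/4\omega^2}\big)$; writing $\hat C=1/(1+A)$ the bound $\mu(\hat C)>-1$ collapses to $2+A-B>0$ with $A=K\omega\tfrac{2}{\pi}e^{-\nu\pi^2/4\omega^2}$ and $B=K\nu\tfrac{\pi}{\omega}e^{-\nu\pi^2/4\omega^2}$, which holds because $A>B\iff 2\omega^2>\nu\pi^2$, true for $\nu=\omega^2/8.4\pi^2$. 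As $C\to 1^-$ the same identity forces $v_*\to\pi/\omega$, so from \eqref{Eq: Derivative} $\mu\to 1-K\omega e^{-\nu\pi^2/\omega^2}\le 1-\pi\omega<-1$. Continuity then yields $C_{**}\in(\hat C,1)$ with $\mu(C_{**})=-1$ and, as required, $v_*\in(\pi/2\omega,\pi/\omega)$.

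There remain the two genericity conditions at this implicit $v_*$. Because $f'_C=C\,h(v)$ with $h$ independent of $C$, one has $\partial_C f'_C(v_*)=-1/C_{**}$, while the implicit function theorem gives $v_*'(C_{**})=v_*/(2C_{**})$; hence $\tfrac{d\mu}{dC}=C_{**}^{-1}\big(\tfrac12 v_* f''_{C_{**}}(v_*)-1\big)$, so transversality is exactly the condition $v_* f''(v_*)\neq 2$. For the normal-form coefficient I would use that a flip of $f$ is a pitchfork of $f^2$, whose cubic coefficient equals $\tfrac14\big(f''(v_*)\big)^2+\tfrac16 f'''(v_*)$, so non-degeneracy reads $\tfrac14(f'')^2+\tfrac16 f'''\neq 0$. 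The latter I expect to settle by sign analysis on $(\pi/2\omega,\pi/\omega)$, where $\sin(\omega v_*)>0$ and $\cos(\omega v_*)<0$: the $\omega^2$-dominant part of \eqref{Eq: ThirdDerivative} is $-CK\omega^2\cos(\omega v_*)e^{-\nu v_*^2}>0$, and since $(f'')^2\ge 0$ the coefficient is strictly positive for $\omega=31/2$, giving a non-degenerate (supercritical) flip.

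The main obstacle is that $v_*$ is not available in closed form, so transversality cannot be checked by substitution. I would close it by using the reduced flip equation $\omega v_*\cot(\omega v_*)=2\nu v_*^2-1$ (the $O(1/K)$ remainder being negligible since $K$ is enormous and $C_{**}\sim 1/K$) to confine $\omega v_*$ below the root $x_1\approx 2.03$ of $x\cot x=-1$; on $(\pi/2,x_1)$ the $\sin$-coefficient term of \eqref{Eq: SecondDerivative} dominates, forcing $f''(v_*)<0$, hence $v_* f''(v_*)<0\neq 2$ and indeed $\mu$ strictly decreasing through $-1$. The remaining work is purely quantitative---verifying that the $O(CK)$ subleading terms in \eqref{Eq: SecondDerivative}--\eqref{Eq: ThirdDerivative} cannot overturn the $\omega^2$-dominant signs at the prescribed $\omega,\nu,K$---which I would corroborate with the same numerical evaluation already used to locate $C_*$.
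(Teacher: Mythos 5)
Your proposal follows the same skeleton as the paper's proof: take the fixed point $v_*\in(\pi/2\omega,\pi/\omega)$ from Lemma \ref{Lemma: FP}, locate $C_{**}$ with multiplier $f'_{C_{**}}(v_*)=-1$, check a transversality condition, and verify nondegeneracy by showing $f'''_{C_{**}}(v_*)>0$ from the signs of $\sin(\omega v)$ and $\cos(\omega v)$ on that interval (your $\tfrac14(f'')^2+\tfrac16 f'''$ is half of the paper's $\tfrac12(f'')^2+\tfrac13 f'''$, so the conditions coincide). Where you genuinely go beyond the paper: (i) the paper merely asserts that $C_{**}$ ``can be numerically verified'' to satisfy $C_*<C_{**}<1$, whereas you give an analytic intermediate-value argument for the multiplier along the branch $v_*(C)$, anchored at the exact fixed point $v_*(\hat C)=\pi/2\omega$ (a correct observation hidden in the proof of Lemma \ref{Lemma: FP}) and at the limit $C\to 1^-$ where $v_*\to\pi/\omega$; (ii) the paper treats transversality as $\partial_C f'_{C_{**}}(v_*)=-1/C_{**}\neq 0$ with $v_*$ frozen, whereas you differentiate the multiplier along the moving branch, which is the correct reading of the Kuznetsov condition and produces the extra requirement $v_*f''(v_*)\neq 2$ that the paper never examines; your localization of $\omega v_*$ below the root of $x\cot x=-1$ to force $f''(v_*)<0$ is a sensible way to close it. The trade-off is that several of your steps (the branch remaining in $(\pi/2\omega,\pi/\omega)$ without encountering a fold, and the subleading-term estimates in \eqref{Eq: SecondDerivative}--\eqref{Eq: ThirdDerivative}) are only sketched and still need the quantitative verification you acknowledge, while the paper's proof is shorter precisely because it delegates those points to numerics or omits them.
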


\begin{proof}
It is shown in Lemma \ref{Lemma: FP} that a fixed point $v_*$ exists between $v=\pi/2\omega$
and $v = \pi/\omega$.  Now we need to show that period doubling occurs for some 
$v_* \in (\pi/2\omega,\pi/\omega)$.

By definition the fixed point $v_*$ satisfies $f_{C_{**}}(v_*) = v_*$.  Further, we notice
$f'_{C_{**}}(v_*) = -1$ when
\begin{equation}
C_{**} = -\{1 + K[\omega\cos(\omega v_*) - 2\nu v_*\sin(\omega v_*)]e^{-\nu v_*^2}\}.
\end{equation}
The bifurcation parameter $C_{**}$ can be numerically verified to be bounded as $C_* < C_{**} < 1$.
Since $f'_{C_{**}}(v_*) = -1$, from \eqref{Eq: Partial_Cv}, $\partial_C f'_{C_{**}}(v_*) = -1/C_{**} \neq 0$.

The final condition to satisfy is $\frac{1}{2}f''_{C_{**}}(v_*)^2 + \frac{1}{3}f'''_{C_{**}}(v_*) \neq 0$.
Notice that clearly $\frac{1}{2}f''_{C_{**}}(v_*)^2 > 0$, so then if $f'''_{C_{**}}(v_*) > 0$, the condition
is satisfied.  Since $\pi/2\omega < v_* < \pi/\omega$, $\sin(\omega v) > 0$ and $\cos(\omega v) < 0$.
Further,
\begin{align*}
&12\nu^2\omega v^2 - 6\omega v - \omega^2 < 0, \text{  and}\\
&4\nu\omega + 12\nu^2 + 8\nu^2\omega v - 16\nu^4 v^3 > 0.
\end{align*}
Therefore, $f'''_{C_{**}}(v_*) > 0$, and hence $\frac{1}{2}f''_{C_{**}}(v_*)^2 + \frac{1}{3}f'''_{C_{**}}(v_*) \neq 0$.
Thereby completing the proof.
\end{proof}

We have shown period doubling of a fixed point, which is in fact the beginning of period doubling cascades to
chaos bringing us one step closer to proving the map
\eqref{Eq: SingleMap} is chaotic.  If we could show the existence of a period-3 orbit we would have
a definitive proof of chaos.  This begs the question, ``Does a 3-cycle exist for \eqref{Eq: SingleMap}?''
Fortunately, there is ample evidence of it in Fig. \ref{Fig: 3-cycle}.  Figure \hyperref[Fig: f3]{\ref{Fig: 3-cycle}a}
shows evidence of four 3-cycles (by symmetry) and Fig. \hyperref[Fig: CobWeb]{\ref{Fig: 3-cycle}b}
shows a cobweb plot of a specific 3-cycle.

\begin{figure*}[htbp]
\centering

\stackinset{r}{}{t}{}{\textbf{\large (a)}}{\includegraphics[width = 0.9\textwidth]{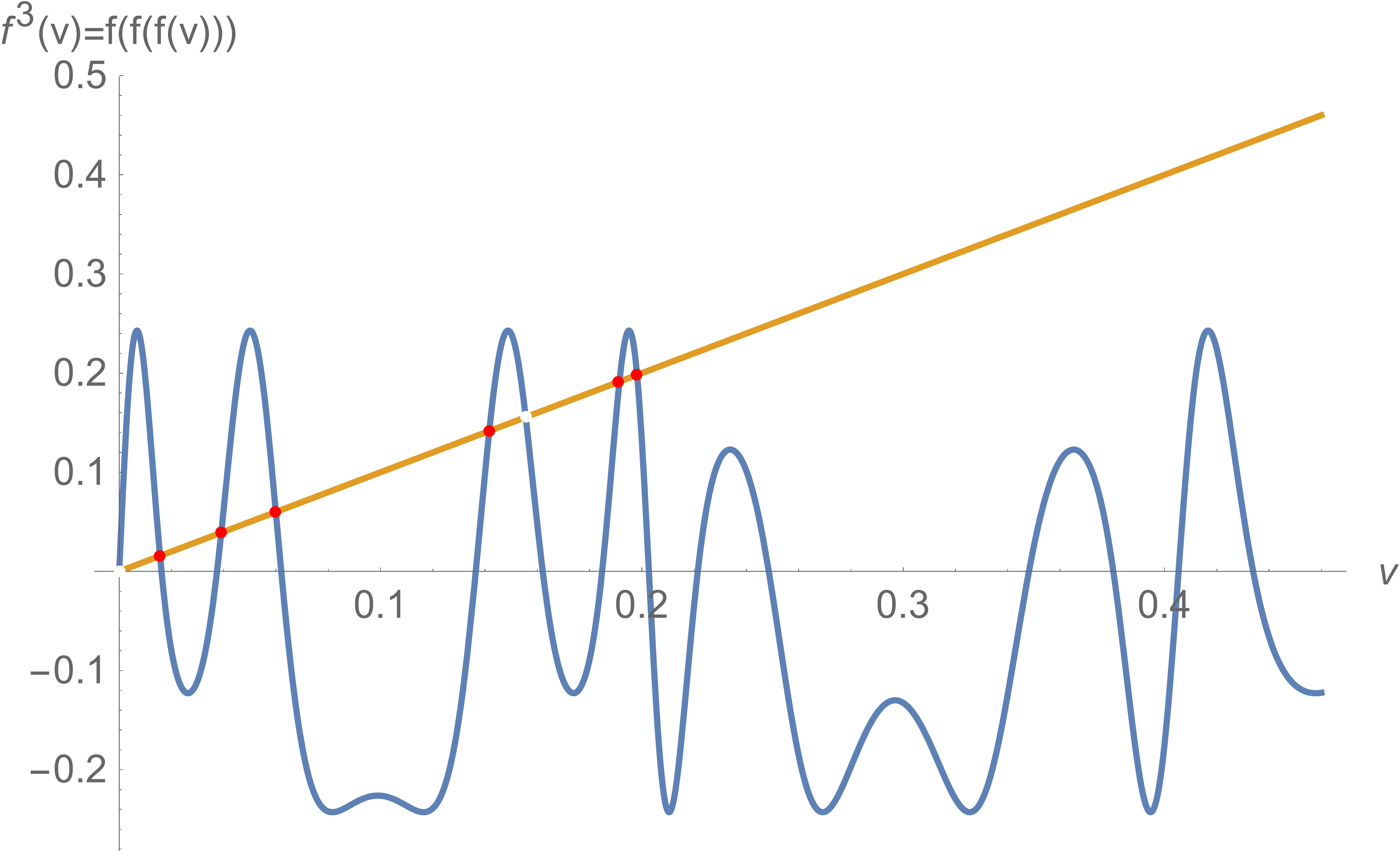}}
\refstepcounter{subfigure}\label{Fig: f3}
\stackinset{r}{}{t}{}{\textbf{\large (b)}}{\includegraphics[width = 0.9\textwidth]{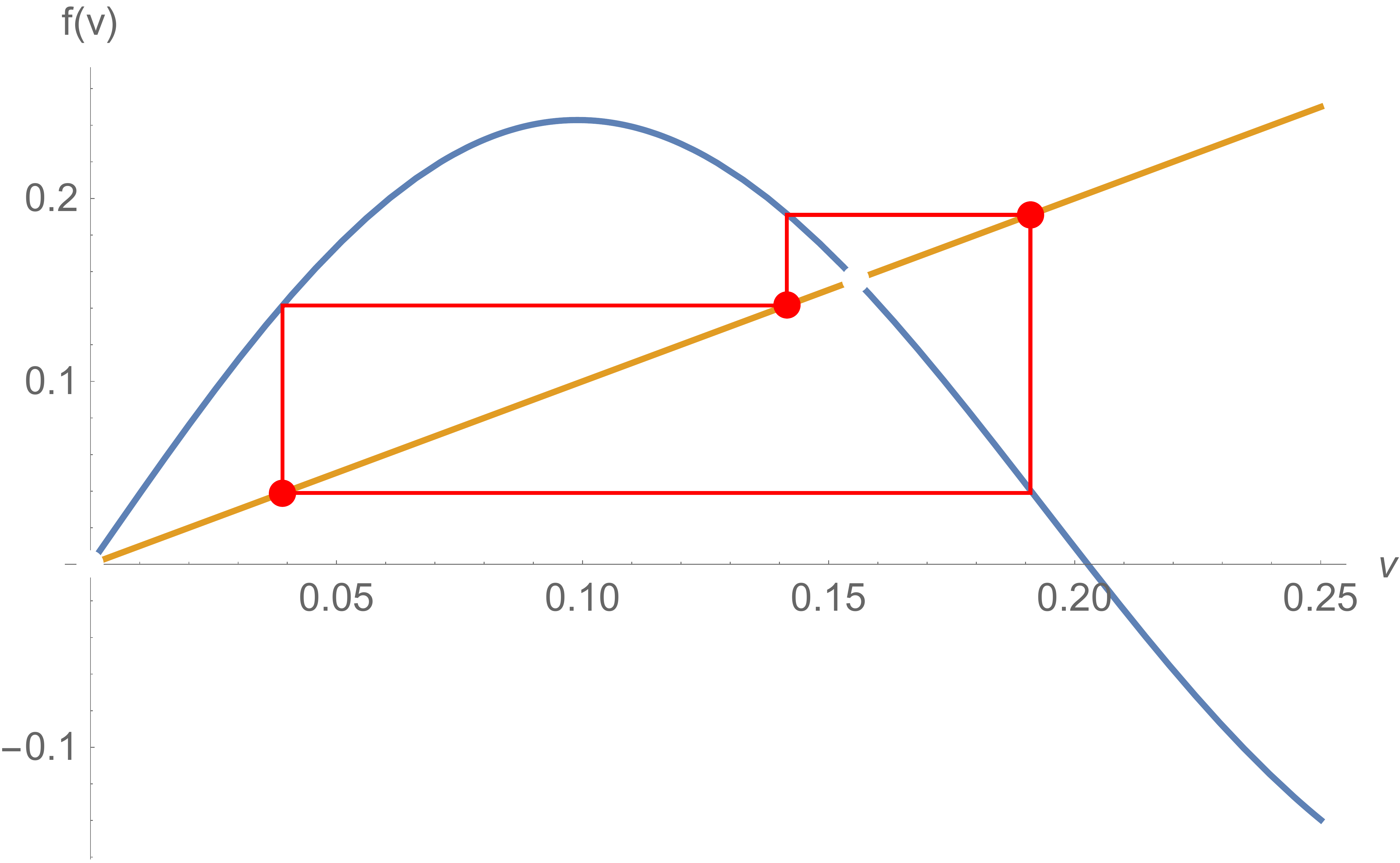}}
\refstepcounter{subfigure}\label{Fig: CobWeb}

\caption{In the chaotic regime with $C = 1/4K$, we observe evident 3-cycles (red markers), which
provides a strong case for the existence of chaotic orbits.  In order to distinguish the 3-cycle points
we remove the fixed points of $f(v)$.  \textbf{(a)}  A set of 3-cycles (in fact, two of them).  By
symmetry this actually shows evidence of four 3-cycles.  \textbf{(b)}  A cobweb plot showing a specific 3-cycle.}
\label{Fig: 3-cycle}
\end{figure*}

To prove the map \eqref{Eq: SingleMap} is chaotic we use the main theorem of Li and Yorke \cite{LiYorke75},
and to satisfy the hypothesis of the theorem we search for 3-cycles.  While a cardinality argument may work
\cite{RJorB18}, it would be quite tricky to keep track of the intersections between $f^3$ and $v$ as $C$
is varied.  While not as satisfactory, it is easier to show the existence of at least one point contained in a
period-3 orbit by using basic calculus techniques to show the existence of a fixed point of $f^3$ different
from the fixed points of $f$.

\begin{thm}
\label{Thm: Chaos}
For every $n \in \N$ there exists periodic points $p_n \in \S$ of the map \eqref{Eq: SingleMap}
$f(v)$ having period $n$ and
$\S$ contains chaotic orbits of $f$.  Furthermore, there exists an uncountable set $S \subset \S$
containing no periodic points.
\end{thm}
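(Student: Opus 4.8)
The plan is to invoke the Li--Yorke theorem \cite{LiYorke75}, which states that if a continuous map of an interval into itself admits a point of period three, then it has periodic points of every period and an uncountable scrambled set containing no periodic (indeed, no asymptotically periodic) points. Thus the entire theorem reduces to exhibiting a single period-3 orbit of $f$, i.e. a point $v$ with $f^3(v) = v$ but $f(v) \neq v$. The Li--Yorke hypothesis is often verified via the chain $f^3(a) \leq a < f(a)$ (or the mirror inequality), which is what I would aim to produce using the intermediate value theorem on well-chosen points, exactly in the spirit of the calculus arguments already deployed in Lemma \ref{Lemma: FP} and Theorem \ref{Thm: PeriodDoubling}.

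First I would restrict attention to a suitable subinterval $I \subset \S$ on which the relevant sign information about $f$, $f'$, and the higher derivatives \eqref{Eq: Derivative}--\eqref{Eq: ThirdDerivative} is controlled. The natural candidate is a neighborhood of the fixed point $v_* \in (\pi/2\omega, \pi/\omega)$ from Lemma \ref{Lemma: FP}, together with the origin, since between these the map oscillates enough to fold the interval over itself. Concretely, I would pick points $a < b < c$ in $\S$ and show, by evaluating $f$, $f^2$, and $f^3$ at them and appealing to continuity, that the images are nested so as to force the Li--Yorke configuration, for instance $f(a) = b$, $f(b) = c$, and $f^3(a) \leq a$, with the strict inequality $f(a) > a$ somewhere guaranteeing the orbit is genuinely of period three rather than a fixed point. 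The existence of the period-doubling cascade established in Theorem \ref{Thm: PeriodDoubling} strongly suggests that at the parameter value of interest ($C = 1/4K$, as in Fig. \ref{Fig: 3-cycle}) the map has already folded sufficiently, so the geometric input needed is plausible.

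The main obstacle will be producing the period-3 inequality rigorously rather than merely numerically, since $f^3$ cannot be handled in closed form. My strategy is to avoid computing $f^2$ and $f^3$ symbolically altogether: instead I would locate, by the intermediate value theorem, preimages of carefully chosen target points and string three such applications together, so that only monotonicity and a handful of sign evaluations of $f$ on subintervals of $(\pi/2\omega, \pi/\omega)$ and its reflection are required. In effect I would certify the existence of points realizing the turning structure visible in Fig. \hyperref[Fig: f3]{\ref{Fig: 3-cycle}a} without tracking all intersections of $f^3$ with the diagonal, which the text itself flags as the tricky bookkeeping a cardinality argument would entail.

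Once a period-3 point is in hand, the conclusion follows immediately: the Li--Yorke theorem yields periodic points $p_n$ of every period $n \in \N$ and an uncountable set $S \subset \S$ free of periodic points on which the dynamics is chaotic in the Li--Yorke sense (sensitive pairs that neither converge nor stay apart). I would close by remarking that the symmetry $f(-v) = -f(v)$ of \eqref{Eq: SingleMap} immediately promotes the single exhibited 3-cycle to the four 3-cycles indicated in Fig. \ref{Fig: 3-cycle}, reinforcing the picture, though a single orbit already suffices for the theorem.
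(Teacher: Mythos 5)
Your overall strategy is the same as the paper's: invoke the Li--Yorke theorem and reduce everything to exhibiting a single period-3 point (or the inequality chain $f^3(a)\le a< f(a)<f^2(a)$) via the intermediate value theorem. The problem is that this reduction is the trivial part of the theorem; the entire mathematical content lies in actually certifying the inequalities on $f^3$, and your proposal describes a strategy for that step without executing it. The paper fixes $C=1/K$, first shows $f(v)>v$ on $(0,\pi/2\omega)$ using $\sin(\omega v)>\omega v/2$ and $e^{-\nu v^2}>2/\omega$ (so that any fixed point of $f^3$ found there is automatically of genuine period 3), and then produces two explicit points, $v_1=\sqrt{-\ln(2/\omega)/\nu}\,/(C+\omega)^2$ and $v_2=39\pi/(128\omega)$, for which elementary upper and lower bounds on $\sin$ and the exponential yield $f^3(v_1)>v_1$ and $f^3(v_2)<v_2$; the intermediate value theorem then closes the argument. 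Your alternative plan --- stringing together three IVT preimage constructions near the fixed point $v_*\in(\pi/2\omega,\pi/\omega)$ of Lemma \ref{Lemma: FP} at $C=1/4K$ --- is plausible given Fig.~\ref{Fig: 3-cycle}, but nothing in your write-up verifies that the required folding actually occurs there, and Theorem \ref{Thm: PeriodDoubling} only gives a single period doubling, not a cascade, so it cannot be cited as evidence that the map ``has already folded sufficiently.'' As written, the proof has a hole exactly where the work is.

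A second, smaller gap: you never address why Li--Yorke, a theorem about continuous self-maps of an \emph{interval}, applies to a map on $\S$. The paper is explicit about this obstruction --- the rigid rotation $\theta\mapsto\theta+2\pi/3$ has infinitely many 3-cycles on $\S$ yet is not chaotic --- and handles it by arguing that for $C\le 1/K$ all 3-cycles of \eqref{Eq: SingleMap} lie in the subinterval $(-\pi/3,\pi/3)$, so the dynamics relevant to the hypothesis can be treated as interval dynamics. You do restrict to a subinterval, but only to control signs of derivatives; you need to justify the restriction on these topological grounds (and, ideally, say why the orbit you construct stays in an interval on which Li--Yorke legitimately applies). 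The closing remark about the symmetry $f(-v)=-f(v)$ multiplying one 3-cycle into several is correct and harmless, but it does not repair either gap.
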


\begin{proof}
It was shown by Li and Yorke \cite{LiYorke75} that the existence of a period-3 orbit in a 1-dimensional
map on an interval implies the existence of chaotic orbits on that interval.  Notice that the fact that $\S$
is not an interval causes some problems.  Namely the map $g(\theta) = \theta + \frac{2\pi}{3}$ has
infinitely many 3-cycles (e.g., $\{-2\pi/3,0,2\pi/3\}$), however clearly this map is not chaotic.  So,
let us restrict our domain to $(-\pi/3,\pi/3) \subset \S$ because $|f^3 - v| > 0$ for
$v \in \S\smallsetminus (-\pi/3,\pi/3)$ when $C \leq 1/K$; i.e., there are no 3-cycles outside of this
interval.

In the context of this setup, let us show that $f^3$ has at least one fixed point that is not
a fixed point of $f$.  Due to symmetry and because zero is always a fixed point of $f$,
we may restrict our search to $v \in (0,\pi/3)$.  Furthermore, we choose $C = 1/K$ in order to
simplify computation.

We need to show $f(v) > v$ on some
interval where $f^3(v) = v$.  Notice that if $v < \pi/2\omega < \pi/3$, then $\sin(\omega v) > \omega v/2$
and $e^{-\nu v^2} > 2/\omega$, which give us the inequality
\begin{equation}
\label{Eq: f>v}
f(v)  > C\left[v + K\left(\frac{\omega v}{2}\right)\left(\frac{2}{\omega}\right)\right] = [C+1]v > v.
\end{equation}
So, there are no fixed points of $f$ for $v \in (0,\pi/2\omega)$.

Now we prove that $f^3(v) = v$ for at least one $v$ on $(0,\pi/2\omega)$.  First let us show $f^3(v) > v$
for some $v = v_1$ by
using the inequalities $\sin(\omega v) > \omega v/2$ and $\sin(\omega v) < \omega v$.  We may write an
upper bound for $f^3$ in terms of $f^2$,
\begin{equation}
\label{Eq: Goal}
f^3 > C\left[f^2 + K\left(\frac{\omega f^2}{2}\right)e^{-\nu (f^2)^2}\right]
= \left[C + \frac{\omega}{2}e^{-\nu (f^2)^2}\right]f^2.
\end{equation}
If there exists a $v = v_1$ such that
\begin{equation}
\label{Eq: f^3>f^2}
C + \frac{\omega}{2}e^{-\nu (f^2)^2} > 1 \text{  and  } f^2 > v,
\end{equation}
then we will have $f^3(v_1) > v_1$.  It should be noted that if the first inequality of \eqref{Eq: f^3>f^2}
is satisfied, then $\sin(\omega f^2) > \omega f^2/2$, which is used for \eqref{Eq: Goal}.
Notice that the inequalities of \eqref{Eq: f^3>f^2} imply that
our choice of $v_1$ depends on both an upper bound and lower bound of $f^2$.  By using $\sin(\omega v) < \omega v$
and \eqref{Eq: f>v} we write the upper bound of $f^2$ in terms of $v$,
\begin{subequations}
\label{Eq: f<[C+w]v}
\begin{align}
f(v) &< C\left[v + K(\omega v)e^{-\nu v^2}\right] < [C + \omega]v\\
\Rightarrow f^2(v) &< C\left[f + K(\omega f)e^{-\nu (f)^2}\right] < [C + \omega]^2v.
\end{align}
\end{subequations}
In order to show $f^2 > v$ we have the same criteria as $f^3 > v$, except with substituting
$f$ into \eqref{Eq: f^3>f^2}.  We already have that $f(v) > v$, then our choice of $v = v_1$
would necessitate
\begin{equation}
\label{Eq: f<sqrt}
f < \sqrt{-\frac{1}{\nu}\ln\left(\frac{2}{\omega}\right)} \Rightarrow
C + \frac{\omega}{2}e^{-\nu (f)^2} > 1.
\end{equation}
Similarly \eqref{Eq: f^3>f^2} would also necessitate,
\begin{equation}
\label{Eq: f^2<sqrt}
f^2 < \sqrt{-\frac{1}{\nu}\ln\left(\frac{2}{\omega}\right)} \Rightarrow
C + \frac{\omega}{2}e^{-\nu (f^2)^2} > 1.
\end{equation}
In order to satisfy criteria \eqref{Eq: f<sqrt} and \eqref{Eq: f^2<sqrt} we choose
\begin{equation}
0 < v_1 = \frac{\sqrt{-\ln(2/\omega)/\nu}}{(C+\omega)^2},
\end{equation}
for which \eqref{Eq: f^3>f^2} and \eqref{Eq: f<[C+w]v} imply $f^3(v_1) > v_1$.

Next we show that there exists a $v = v_2$ such that $f^3(v_2) < v_2$ through mainly
brute-force calculations.  Notice that if $f^2 \in (-\pi/2\omega,0)$, then $\sin(\omega f^2) < \omega f^2/2$,
and we write
\begin{equation}
\label{Eq: f^3<v}
f^3 < \left[C+\frac{\omega}{2}e^{-\nu(f^2)^2}\right]f^2 < v,
\end{equation}
since $v$ is positive.  Hence, our choice of $v_2$ must yield $f^2(v_2) \in (-\pi/2\omega,0)$,
so choose
\begin{equation}
v_1 < v_2 = \frac{39\pi}{128\omega} < \frac{\pi}{2\omega},
\end{equation}
then
\begin{align*}
0.805 < f(v_2) = \frac{39\pi C}{128\omega} + \sin\left(\frac{39\pi}{128\omega}\right)\exp
\left(-\nu\frac{39\pi}{128\omega}\right) < 0.81\\
\Rightarrow -\frac{\pi}{2\omega} < f^2(v_2) = Cf(v_2) + \sin(\omega f(v_2))e^{-\nu f(v_2)} < 0.
\end{align*}
This satisfies the criterion for \eqref{Eq: f^3<v}, and therefore shows $f(v_2) < v_2$.

Since $f(v_2) < v_2$ and $f(v_1) > v_1$, by the intermediate value theorem, there exists a
$v_{***} \in (v_1,v_2)$ such that $f^3(v_{***}) = v_{***}$, thereby completing the proof.
\end{proof}

\section{Simple multi-drop velocity model}\label{Sec: Multiple}

In the multi-drop experiments of Filoux \ea\cite{FHV15} it is observed that adding additional
droplets increases the speed of the group, where each of the walkers have the same speed.
Since the velocity always converges to a steady-state, we assume a single drop is at some constant
velocity ($v_{n+1} = v_n$), which is reasonable as mentioned in Sec. \ref{Sec: Single}
and \ref{Sec: Analysis}.  Now,
each additional droplet increases the velocity by changing the wavefield to produce a kick, which
decays over time to zero.  Let us assume it has a kick strength $\kappa$ similar to that of $C\cdot K$ in
\eqref{Eq: SingleModel}, and a similar type of spatial decay with parameter $\mu$, where the farther the
source of the kick is the less effective it is.

\begin{figure}[htbp]
\centering
\includegraphics[width = 0.9\textwidth]{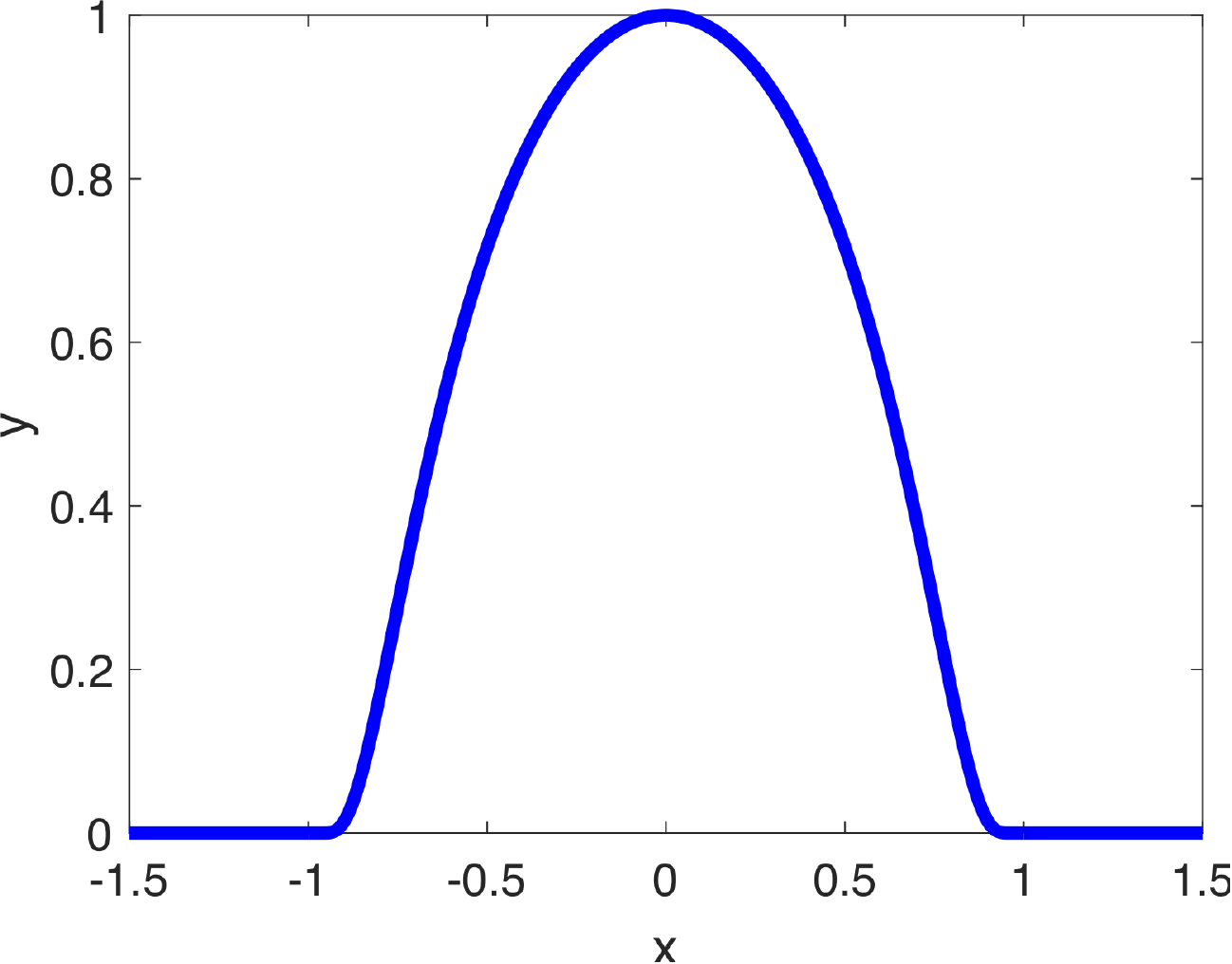}
\caption{An example of a generic bump function.}
\label{Fig: Bump}
\end{figure}

For a single droplet the source of the kick was itself, however for this model we assume the kick from
itself is absorbed into the constant velocity and we consider the kicks from the additional droplets only,
with each droplet providing a different kick.  Further, there is a vanishing temporal decay for which the
earlier impacts produce a smaller decay than latter ones.  Then we write
\begin{equation}
\begin{split}
v_{n+1} &= v_n + \kappa\eta(\gamma n)\sum_{m=1}^M e^{-\mu\big|\theta_n - \theta_n^m\big|}\\
\eta(\xi) &:= \begin{cases}
\exp\left(1 - \frac{1}{1 - \xi_n^2}\right) & \xi_n \in (-1,1),\\
0 & \xi_n \notin (-1,1);
\end{cases}
\end{split},
\label{Eq: Multiple}
\end{equation}
where $v_n$ and $\theta_n$ are the velocity and position of the original droplet after impact $n$,
$\theta_n^m$ is the position of the $m^\th$ additional droplet, and
$M$ is the total number of additional droplets.  The one aspect quite different from \eqref{Eq: SingleModel}
is the use of $\eta$ to represent the temporal damping.  While different functions may be used,
a bump function (Fig. \ref{Fig: Bump}) contains all of the observed properties.  For the argument of
$\eta$, $n$ is the number of impacts since the addition of the last droplet, and $\gamma\in [0,1]$ is the
strength of decay.  If $\gamma = 0$, the kick does not decay exponentially, although it does decay linearly
due to the use of $\kappa$.  Further, if $\gamma = 1$ there is no kick; 
otherwise, the kick decays as $n = \lceil 1/\gamma \rceil$.

We also use $|\theta_n - \theta_n^m|$
instead of $(\theta_n - \theta_n^m)^2$ in our spatial damping term.  In Sec. \ref{Sec: Single} a square was used
in order to make \eqref{Eq: SingleMap} smooth for the sake of analysis, and it did not affect the qualitative behavior
of the droplet.  For multiple drops, however, a square term would be unnecessary because detailed dynamical systems
analysis would be too involved for this study.  In addition, the inclusion of
$(\theta_n - \theta_n^m)^2$ in the argument of the exponential
creates such a difference in velocity contribution between the $m^\th$ and $(m+1)^\th$ additional droplet that
the simulations do not match the experimental data quantitatively, which is the goal of this model.  Since we use
an absolute value instead of a square for the spatial term, we will have a spatial damping parameter $\mu$
different from $\nu$, namely $\mu = \omega/2.1\cdot 2\pi$.

\subsection{Simulations and comparisons}

In this section, we compare the velocity from \eqref{Eq: Multiple} with
the experiments of Filoux \ea \cite{FHV15, FHV15Arxiv}.
For each of the simulations we use $\omega = 31/2$, which is within the experimental range of
Filoux \ea \cite{FHV15}.  Furthermore, the axes tick marks of the figures are matched up exactly
in order to facilitate proper comparison.

In Fig. \ref{Fig: MyWalkers}, the velocity for each successive droplet from \eqref{Eq: Multiple} is
plotted on top of the experimental data.  We choose a kick parameter of $\kappa = 1/7$, which
is below the observed threshold for chaos in \eqref{Eq: SingleModel}.  In addition, a temporal damping
factor of $\gamma = 0.44$ is used, which implies that the system reaches a steady state within $n=3$
bounces.  As we observe, the model \eqref{Eq: Multiple} exhibits
extremely close agreement with the experimental data.

\begin{figure}[htbp]
\centering
\includegraphics[width = 0.9\textwidth]{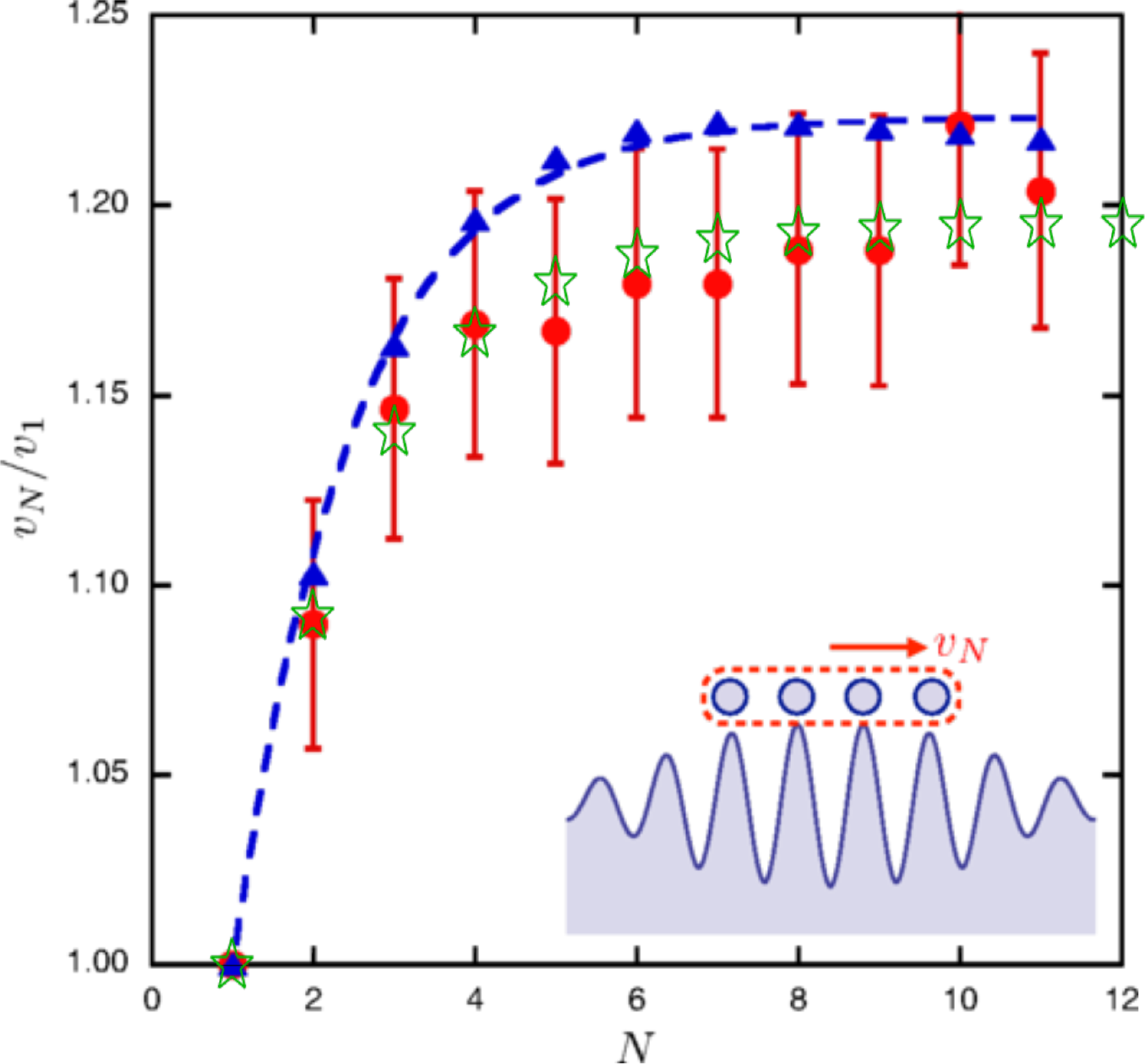}
\caption{Speed of each additional walker.  Green stars represent \eqref{Eq: Multiple}, red markers
are experimental data points with error bars\cite{FHV15}, and blue triangles are from the algebraic model of
Filoux \ea \cite{FHV15} with the dashed curve fitting the points from their model.  For the green stars
$\gamma = 0.44$ and $\kappa = 1/7$ from \eqref{Eq: Multiple}.  Further, the distance between droplets,
$s = \theta - \theta^1 = (8/6)\cdot 2\cdot \pi/\omega \approx 8 mm$, similar to the spacing used in
\cite{FHV15}.  The markers from the results of this article were embedded directly onto the original
figure by precisely matching the axes, and without further modification or manipulations.  {\color{red}
Figure adapted and modified with permission from Filoux \ea \cite{FHV15}}.}
\label{Fig: MyWalkers}
\end{figure}

Filoux \ea also presented plots of the velocity depending on the separation distance of a pair of droplets
\cite{FHV15,FHV15Arxiv}.  Figure \hyperref[Fig: MyWalkers1]{\ref{Fig: MyWalkers12}a} shows the ratio
of speeds of a pair of droplets without offsets or rescaling of the axes\cite{FHV15Arxiv}.  Figure 
\hyperref[Fig: MyWalkers2]{\ref{Fig: MyWalkers12}b} shows the ratio of speeds of a pair of droplets with
an offset of unity and a log-scale for the ordinate\cite{FHV15}.    In order to show the robustness of the
model \eqref{Eq: Multiple}, we use the same parameters as Fig. \ref{Fig: MyWalkers} to plot the ratio of
velocities from our model on top of this
set of experimental data.  Once again the model exhibits extremely close agreement with experiments.

\begin{figure}[htbp]
\centering

\stackinset{r}{0.5mm}{t}{1mm}{\textbf{\large (a)}}{\includegraphics[width = 0.44\textwidth]{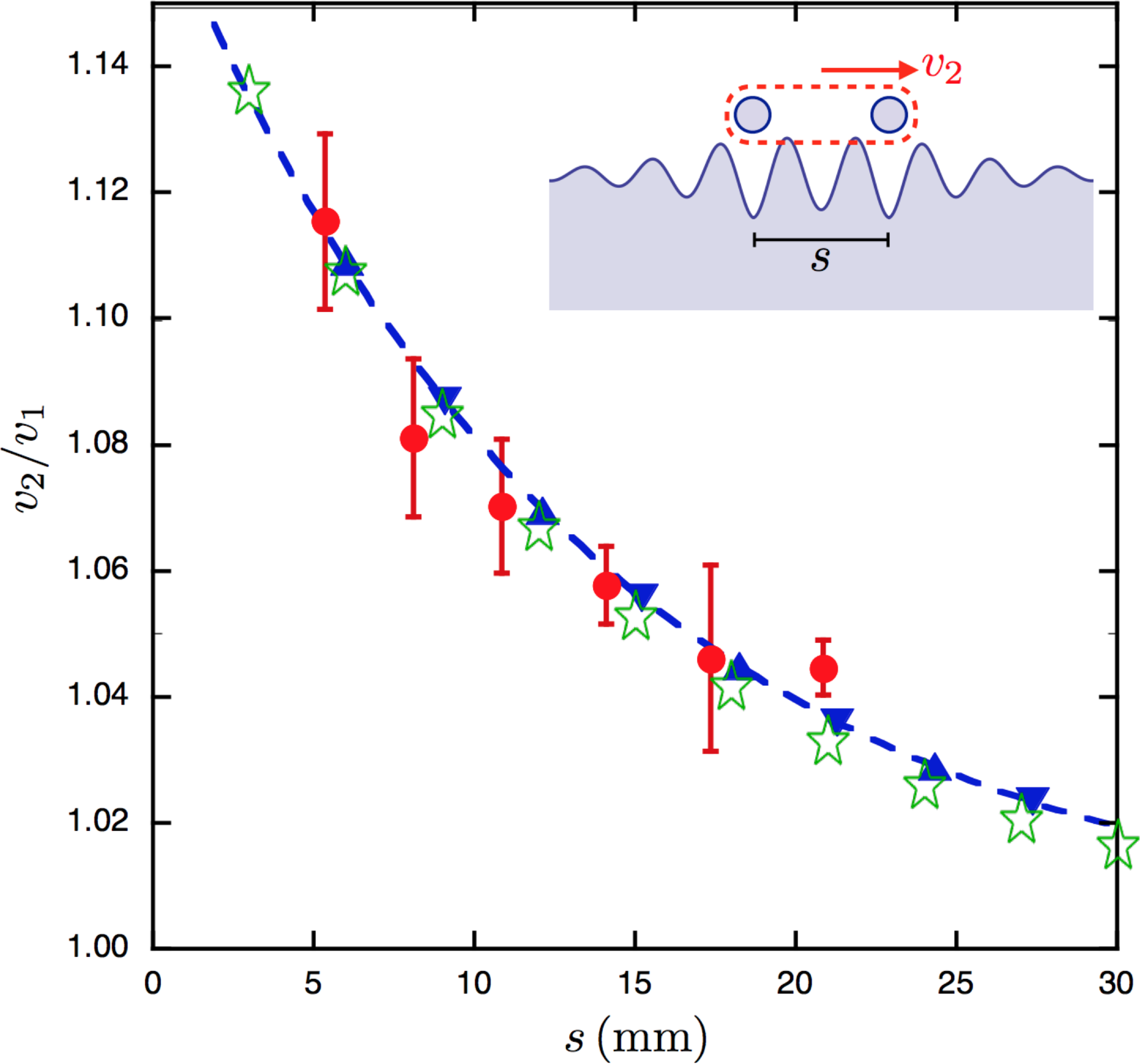}}
\refstepcounter{subfigure}\label{Fig: MyWalkers1}
\stackinset{r}{0.5mm}{t}{1mm}{\textbf{\large (b)}}{\includegraphics[width = 0.44\textwidth]{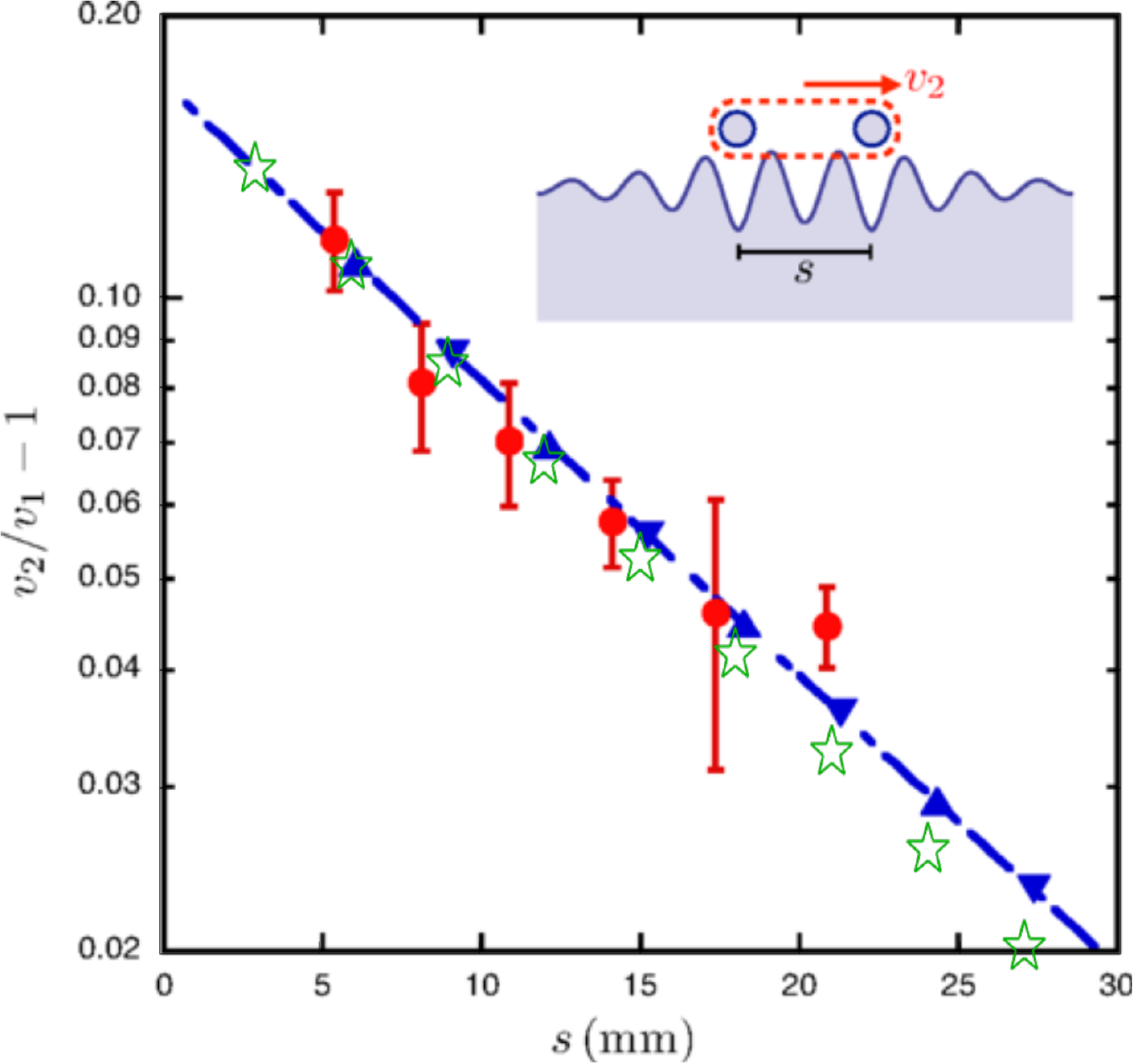}}
\refstepcounter{subfigure}\label{Fig: MyWalkers2}

\caption{Ratio of speeds of a pair of droplets as the separation is increased.  Green stars represent
\eqref{Eq: Multiple}, red markers are experimental data points with error bars\cite{FHV15}, and blue 
triangles are from the algebraic model of Filoux \ea \cite{FHV15} with the dashed curve fitting the
points from their model.  For the green stars $\gamma = 0.44$ and $\kappa = 1/7$ from \eqref{Eq: Multiple}.
\textbf{(a)}  Ratio of speeds without offsets or rescaling
of axes as presented in the original figure of Filoux \ea \cite{FHV15Arxiv}.
\textbf{(b)}  Ratio of speeds offset by unity with a log-scale for the ordinate as presented in the
original figure of Filoux \ea \cite{FHV15}.
The markers from the results of this article were embedded directly onto the original
figures by precisely matching the axes, and without further modification or manipulations.  {\color{red}
Figures adapted and modified with permission from Filoux \ea \cite{FHV15, FHV15Arxiv}}.}
\label{Fig: MyWalkers12}
\end{figure}

\pagebreak

\section{Potential unified model}\label{Sec: Combined}

To predict unforeseen behavior we derive a unified model and present simulations predicting behavior of
multiple droplets in the chaotic regime.  In Sec. \ref{Sec: Multiple} it was assumed that a single droplet
would move with a constant speed, however as shown in Sec. \ref{Sec: Single} this is not the case when
the damping is low enough.  So, we may simply include the right hand side of the velocity equation from
\eqref{Eq: SingleModel} in place of the constant velocity in \eqref{Eq: Multiple}.  However, in order to use
the same parameters for the two parts, we write the argument of the exponent for the spatial damping
with $\left(\theta_n^i - \theta_n^m\right)^2$ instead of $\big|\theta_n^i - \theta_n^m\big|$.
Further, in general we expect droplets to repel each other as the distance between two successive ones
decrease, so we include a signum function as the coefficient of the exponential term,
\begin{equation}
v_{n+1}^i = 
\begin{cases}
C\left[v_n^i + K\sin\left(\omega v_n^i\right)e^{-\nu(v_n^i)^2}\right. & \\
\left. + K\eta(\gamma n)\sum_{m\neq i}^M \sgn\left(\theta_n^i - \theta_n^m\right)
e^{-\nu\left(\theta_n^i - \theta_n^m\right)^2}\right] & M \neq 0,\\
C\left[v_n^i + K\sin\left(\omega v_n^i\right)e^{-\nu(v_n^i)^2}\right] & M = 0.
\end{cases}
\label{Eq: Combined}
\end{equation}
In \eqref{Eq: Combined} $v_n^i$ and $\theta_n^i$ are the velocity and position of the $i^\th$ droplet after
the $n^\th$ impact.  The parameters in \eqref{Eq: Combined} are the same as \eqref{Eq: SingleModel}, but
different from \eqref{Eq: Multiple}.

While \eqref{Eq: Combined} is a speculative model and requires more detailed scrutiny and modifications,
it exhibits three types of behavior (Fig. \ref{Fig: MultiChaotic}) that may potentially be observed in future experiments.
We use five droplets in Fig. \ref{Fig: MultiChaotic} to simulate \eqref{Eq: Combined}.
In Fig. \hyperref[Fig: MultiChaotic6k]{\ref{Fig: MultiChaotic}a} we observe regular motion of the droplets
around the annulus where the distance between successive droplets remain nearly constant.  Then
Fig. \hyperref[Fig: MultiChaotic499k]{\ref{Fig: MultiChaotic}b} shows destabilization of the droplet
configuration leading to successive droplets approaching each other, but without crossing paths.
Figure \hyperref[Fig: MultiChaotic3k]{\ref{Fig: MultiChaotic}c} shows complete destabilization with
droplets crossing paths, or perhaps going around one another.

\begin{figure*}[htbp]
\centering

\stackinset{l}{4.8mm}{t}{1mm}{\textbf{\large (a)}}{\includegraphics[width = 0.32\textwidth]{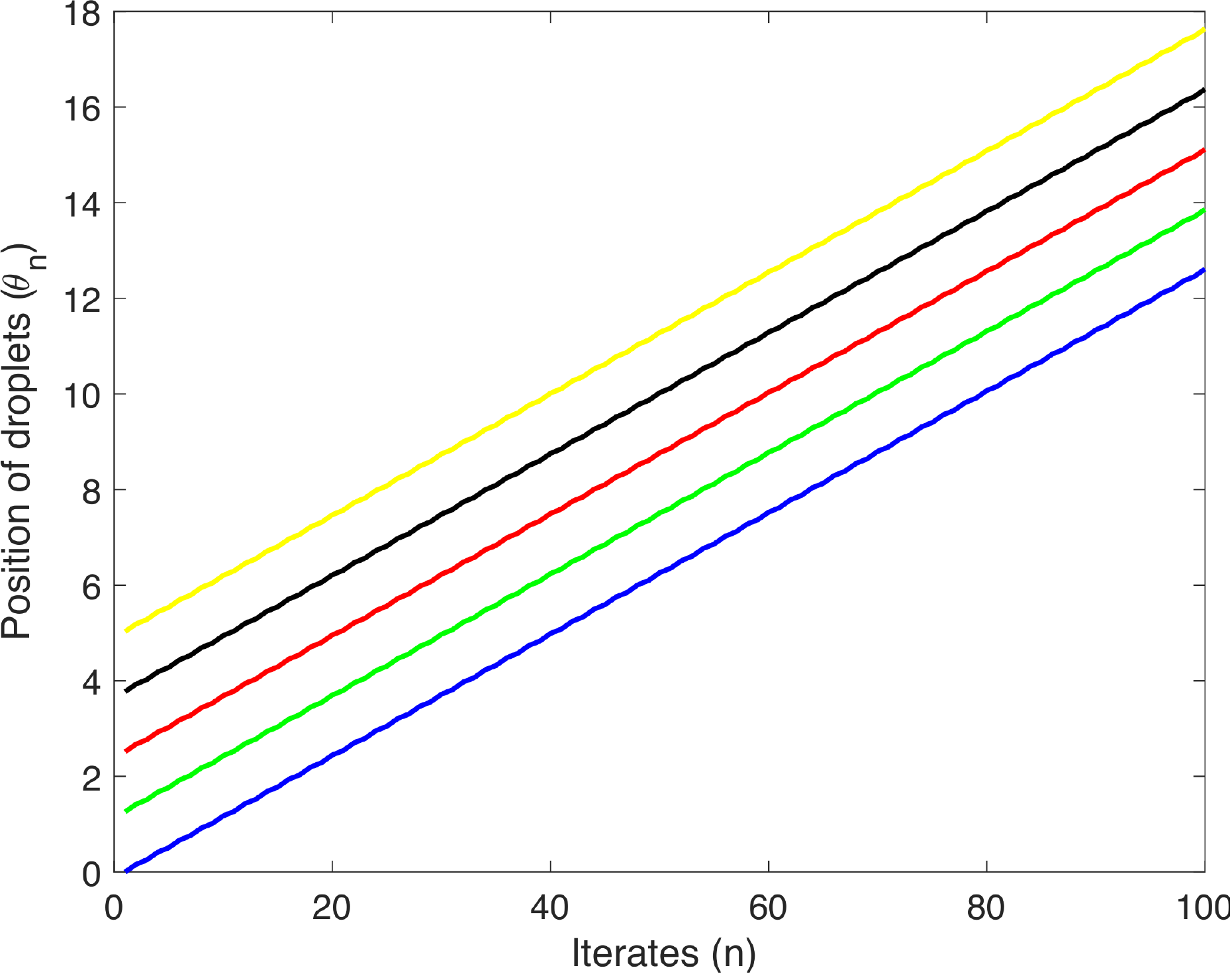}}
\refstepcounter{subfigure}\label{Fig: MultiChaotic6k}
\stackinset{l}{4.8mm}{t}{1mm}{\textbf{\large (b)}}{\includegraphics[width = 0.32\textwidth]{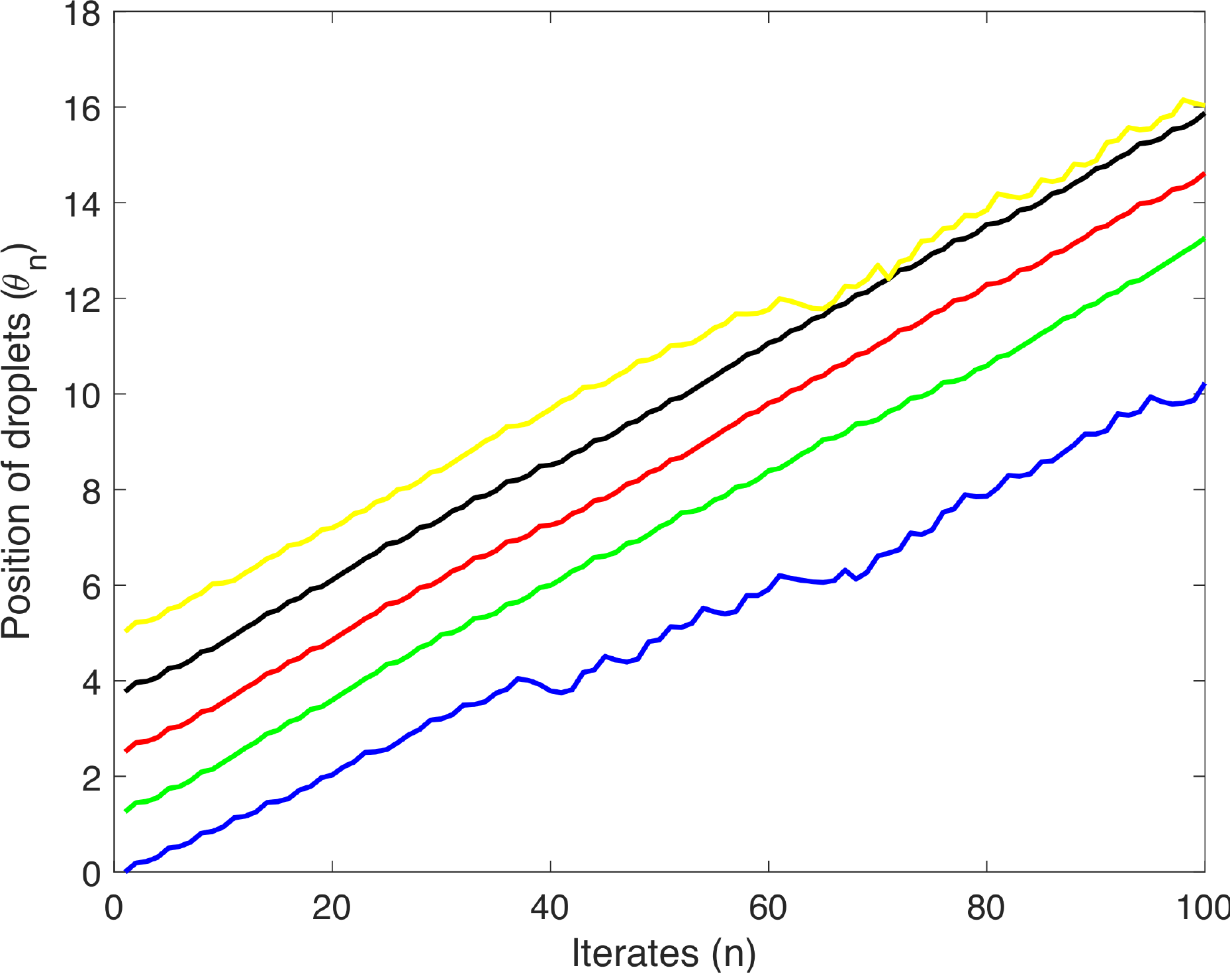}}
\refstepcounter{subfigure}\label{Fig: MultiChaotic499k}
\stackinset{l}{4.8mm}{t}{1mm}{\textbf{\large (c)}}{\includegraphics[width = 0.32\textwidth]{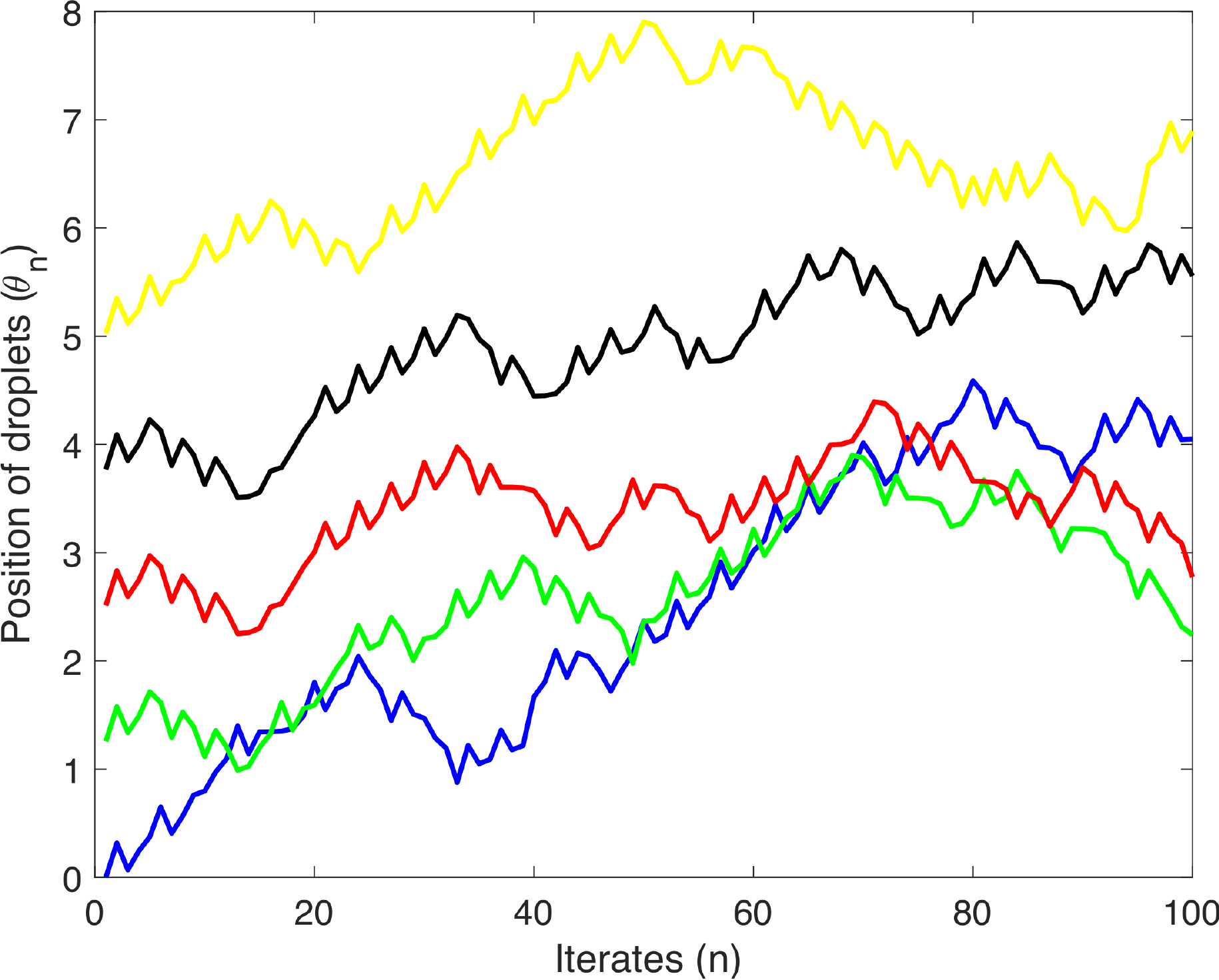}}
\refstepcounter{subfigure}\label{Fig: MultiChaotic3k}

\caption{Three different types of observations from \eqref{Eq: Combined}.  The colors in each figure signifies a
different droplet.  \textbf{(a)}  The droplets walk in an orderly fashion with the distance between successive
droplets remaining nearly constant.  \textbf{(b)}  The droplet configuration destabilizes and the distance varies,
but the droplets do not go past one another.  \textbf{(c)}  The droplets start going past one another.}
\label{Fig: MultiChaotic}
\end{figure*}

\section{Conclusion}
\label{Sec: Conclusion}

While there have been many detailed hydrodynamic models of walking droplet phenomena, simple
discrete dynamical models are a recent occurrence.  The first model amenable to rigorous dynamical
systems analysis without additional simplifications, developed by Gilet, described a droplet walking
on a straight line\cite{Gilet14}.  Although it exhibits interesting dynamical behavior, there are limitations
in comparing it with experiments due to complications at boundaries of a finite rectangular domain.
One way to remove the boundary complication is to study the walker on an annulus.  Fortunately,
there have been experiments conducted on walking in an annular cavity\cite{FHV15, FHV15Arxiv,
PucciHarrisPrivate, MilesPrivate}.

In this investigation we developed discrete dynamical models of a single walker on an annulus in
Sec. \ref{Sec: Single} and multiple non-chaotic walkers in Sec. \ref{Sec: Multiple}.  The single
droplet model is simulated numerically and analyzed rigorously via dynamical systems theory
in Sec. \ref{Sec: Analysis}.  Through the numerical simulations qualitative agreement with the
experiments of Pucci and Harris \cite{PucciHarrisPrivate} is shown.  Further, the simulations
indicate the existence of pitchfork bifurcations, period doubling bifurcations, and chaos.  The
rigorous dynamical systems analysis of Sec. \ref{Sec: Analysis} proved the existence of additional
fixed points (Lemma \ref{Lemma: FP}), the appearance of which is proved to be a pitchfork bifurcation
(Lemma \ref{Lemma: Pitchfork}).  Then the existence of a period doubling bifurcation for those fixed points
is proved in Theorem \ref{Thm: PeriodDoubling}.  Chaos is proved by showing the existence of
period-3 orbits (Theorem \ref{Thm: Chaos}) by using the result of Li and Yorke \cite{LiYorke75}.
The multiple non-chaotic walker model from Sec. \ref{Sec: Multiple} is simulated showing
excellent agreement with several experimental data sets of Filoux \ea \cite{FHV15, FHV15Arxiv}.
Finally, a speculative unified model is derived in Sec. \ref{Sec: Combined} with simulations showing
plausible scenarios for future experiments.

While much of the behavior observed in experiments is captured in the models of Secs. \ref{Sec: Single}
and \ref{Sec: Multiple}, there are still subtleties that need to be investigated.  One such issue is
the adverse effect of using $\left(\theta_n - \theta_n^m\right)^2$ instead of $\big|\theta_n - \theta_n^m\big|$
has on the simulations.  This small difference, even while being consistent with the parameters for the
change, causes the simulations to stray far from quantitative experimental observations. In addition,
the analysis done in this article can be extended further in a future work focusing on dynamical systems
and bifurcations of the model(s).  Moreover, the speculative unified model of Sec. \ref{Sec: Combined}
needs to be scrutinized in detail and modified as more experimental data becomes available.
We endeavor to tackle these issues in forthcoming works.

\section*{Acknowledgment}
First of all, the author would like to express his gratitude to the editors of the Hydrodynamic Quantum Analogs
special issue, and specifically J. Bush for the invitation to contribute to a topic of great scientific and mathematical
importance.  Thanks are also due D. Blackmore for enlightening discussions and V. Ratnaswamy for helping recover
an accidentally deleted figure file.  Finally, the author wishes to thank the Department of Mathematics and Statistics
at TTU and the Department of Mathematical Sciences at NJIT for their support.

\bibliographystyle{unsrt}
\bibliography{Bouncing_droplets}

\end{document}